\def\section{\@startsection{section}{1}%
	\z@{.7\linespacing\@plus\linespacing}{.5\linespacing}%
	{\bfseries
		\centering
}}
\def\@secnumfont{\bfseries}
\newtheorem{theorem}{Theorem}[section]
\newtheorem{corollary}[theorem]{Corollary}
\newtheorem{definition}[theorem]{Definition}
\newtheorem{lemma}[theorem]{Lemma}
\newtheorem{proposition}[theorem]{Proposition}
\newtheorem{remark}[theorem]{Remark}
\numberwithin{equation}{section}
\colorlet{blu1}{blue!70!black}
\colorlet{blu2}{blue!50!black}
\colorlet{blu3}{blue!70!red}
\colorlet{blu4}{blue!60!green}
\colorlet{red1}{red!80}
\colorlet{red2}{red!50!black}
\colorlet{red3}{red!70!yellow}
\colorlet{red4}{red!50!yellow}
\colorlet{yel1}{yellow!50!black}
\colorlet{yel3}{yellow!20!blue}
\colorlet{gre1}{green!60!blue}
\colorlet{gre2}{green!60!black}
\colorlet{gre3}{green!40!black}
\begin{document}
{\tiny\bf\color{brown}Accepted by "Stochastics: An International Journal Of Probability And Stochastic Processes"}
	\vspace{1.5cm}
\begin{center}

{\bf \Large Probabilistic Analysis of the $({\bf q},2)-$Fock Space: Vacuum Distribution and Moments of the Field Operator} \vspace{1.5cm}

{\bf\large Yungang Lu}\vspace{1cm}
	
{\large Department of Mathematics, University of Bari ``Aldo Moro''}\vspace{0.3cm}
	
{\large Via E. Orabona, 4, 70125, Bari, Italy} \vspace{0.3cm}
	
e-mail:yungang.lu@uniba.it
\end{center}\vspace{1.2cm}

\begin{center} {\large ABSTRACT}\end{center}\vspace{0.6cm}

\noindent We study some probabilistic aspects of the $({\bf q},2)-$Fock space over a given Hilbert space. Specifically, our focus is on determining  the vacuum distribution of the field operator. We ascertain the precise expression for the vacuum expectation of any arbitrary product of creation and annihilation operators, with particular emphasis on the vacuum moments of the field operator. Utilizing these results, we calculate the moment-generating function of the field operator and subsequently determine the corresponding distribution.\medskip

\noindent{\it Keywords}: $({\bf q},2)$-Fock space; Moment-generating function; Vacuum distribution of field operator.

\section{Introduction}\label{q2pr-s1}

The $q-$commutation relation (referred to as $q-$CR for simplicity) and the $q-$Fock space were introduced as generalizations and deformations of the standard canonical commutation relation (CCR), canonical anti--commutation relation (CAR) and the Bosonic--Fermionic Fock spaces commonly used in quantum theory (see \cite{Bieden1989}, \cite{Gre91}, \cite{Gre2001}, \cite{GreHil99} and references within). They lead to various mathematical and physical applications, particularly in quantum groups, quantum field theory, statistical mechanics, and quantum information theory (see, e.g. \cite{KayLadMos2006}, \cite{KliSch1997}, \cite{Woit2024}). 
	
Closed related, the $q-$Gaussian process (and $q-$Gaussian distribution) was formally introduced in \cite{Fris-Bou70}, \cite{Gre90}, and \cite{Tsallis1988}.
It was subsequently mathematically established and investigated in the context of non-commutative probability in further works such as \cite{BoKumSpe97}, \cite{Bo-Spe91}, and \cite{Bo-Spe92}. Notably, the $q-$Gaussian distribution is characterized as the vacuum distribution of the field operator, which is the sum of creation and annihilation operators, in a $q-$deformed Fock space. 
Furthermore, the corresponding stochastic calculus was developed in \cite{DonMar2003}.

The standard $q-$CR 
\begin{align}\label{q-comm}
a_q(f)a_q^+(g)-qa_q^+(g)a_q(f)=\,\langle f,g\rangle,\qquad \forall f,g\in\mathcal{H}
\end{align}
and the $q-$Fock space have been extended by different authors. Here are two representative extensions:

$\bullet$ In \cite{Blitvic}, the standard $q-$CR \eqref{q-comm} is replaced by the following $(q,t)-$commutation relation ($(q,t)-$CR for simplicity):
\begin{align}\label{qt-01}
a_{q,t}(f)a_{q,t}^+(g)-qa_{q,t}^+(g)a_{q,t}(f)=\,\langle f,g\rangle\,t^N,\qquad \forall f,g\in\mathcal{H}
\end{align}
where, $t>0$ and $\mathcal{H}$ is a given Hilbert space; $a_{q,t}(f)$, $a_{q,t}^+(f)$ are the annihilation and creation operator with the {\it test function} $f\in \mathcal{H}$, and $N$ is the number operator on the $(q,t)-$Fock space over ${\mathcal H}$ (see \cite{Blitvic} for detail).

$\bullet$ in \cite{AsaYos2020}, motivated by the $(q,t)-$extension of $q-$CR and $q-$Fock space mentioned above, and  by a particular commutation relation introduced in \cite{aclu-qqbit} as an example of the {\bf type II} interacting Fock space, the standard $q-$CR \eqref{q-comm} is {\it weighted} as the following $(q,\{\tau_n\})-$CR:
\begin{align}\label{qt-02}
a_{q,\{\tau_n\}}(f)a_{q,\{\tau_n\}}^+(g)-q\beta_N a_{q,\{\tau_n\}}^+(g)a_{q,\{\tau_n\}}(f)=\,\langle f,g\rangle\,\tau_N,\qquad \forall f,g\in\mathcal{H}
\end{align}
where, $\{\tau_n\}_{n\in {\mathbb N}^*}\subset (0,\infty)$ and $\beta_n:=\frac{\tau_{n+1}}{\tau_n}$ for any $n\in {\mathbb N}^*$. $a_{q,\{\tau_n\}}(f)$, $a_{q,\{\tau_n\}}^+(f)$ and $N$ are the annihilation and creation operator with the {\it test function} $f\in {\mathcal H}$, and the number operator on a suitable Fock space, namely the $\tau -$weighted (generalized) $q-$Fock space over ${\mathcal H}$ (see \cite{AsaYos2020} for detail).


In \cite{YGLu2023a} (as well as \cite{YGlu2008}), the notion of $({\bf q},m)-$Fock space over a given Hilbert space ${\mathcal H}$ (refer to Section \ref{prq2-s2} for its definition in case the $m=2$) is introduced. This construction aims to illuminate the Quan--algebra, which offers a natural generalization of the traditional $q-$algebra. In this framework, creation and annihilation operators obey the $({\bf q},m)-$communication relation ($({\bf q},m)-$CR for the simplicity), serving as an analogue to the standard $q-$CR:
\begin{equation}\label{(q,m)-comm01a}
A(f)A^+(g)-	q{\bf p}_m\,A^+(g)A(f)=\langle f, g\rangle \,,\quad \forall\, f,g\in \mathcal{H}
\end{equation}
and 
\begin{equation}\label{(q,m)-comm01b}
A(f){\bf p}_{k+1}={\bf p}_kA(f)\,,\qquad \forall\, k\in \mathbb{N} \text{ and } f\in\mathcal{H}
\end{equation}
where, 

$\bullet$ for any {\it test function} $f\in{\mathcal H}$, $A(f)$ (respectively, $A^+(f)$) is the $({\bf q},m)-$annihilation (respectively, creation) operator;  in other words, $A(f)$ is the annihilation operator and $A^+(f)$ is the creation operator on the $({\bf q},m)-$Fock space over ${\mathcal H}$ (detailed in Section \ref{prq2-s2} for $m=2$);

$\bullet$ 
for any $k\in \mathbb{N}$,
\begin{equation*}
{\bf p}_k:=\begin{cases}0,&\text{ if }k=0\\
\text{the projector onto }\bigoplus_{j=0} ^{k-1}\mathcal{H}_j, &\text{ if }k\ge 1	
\end{cases}
\end{equation*} 
and each $\mathcal{H}_r$ is the {\it $r-$particle subspace} of the $({\bf q},m)-$Fock space over ${\mathcal H}$.

As mentioned in \cite{Gre2001}, \cite{GreHil99}, and related references, the $q-$CR could interpreted as a convex combination of the CCR with wight $(1+q)/2$ and the CAR with wight $(1-q)/2$. Similarly, the formula \eqref{(q,m)-comm01a} could be regarded as the convex combination of the CCR with wight $(1+q{\bf p})/2$ and CAR with wight $(1-q{\bf p})/2$, where ${\bf p}$ is a projector. 

In the present paper, our focus lies on the $({\bf q},2)-$Fock space over ${\mathcal H}$.
To highlight its novelty, we examine some combinatorial distinctions between this space and other Fock spaces mentioned earlier. Let $a(f)$  (respectively, $a^+(f)$) denote any of the following annihilation operators: $a_q(f)$, $a_{q,t}(f)$, $a_{q,\{\tau_n\}}(f)$ and $A(f)$ (respectively, the corresponding creation operators); and let $\Phi$ denote the vacuum vector for all these Fock spaces. For any $n\in\mathbb{N}^*$ and $\big\{f_k,g_k:\,k\in\{1,\ldots,n\}\big\}\subset {\mathcal H}$, the vector
\begin{align}\label{comb01}
a(f_n)\ldots a(f_1)a^+(g_1)\ldots a^+(g_n)\Phi
\end{align}
must take the form:
\begin{align}\label{comb01a}
\sum_{\sigma\in{\mathcal S}_{n}} w_{n}(q,\sigma)\prod_{k\in\{1,\ldots,n\}}\langle f_k, g_{\sigma(k)}\rangle \Phi
\end{align}
Here, ${\mathcal S}_n$ denotes the $n-$th permutation group, and $w_{n}(q, \sigma)\in{\mathbb R}$ for any $\sigma\in {\mathcal S}_n$. Moreover:

$\bullet$ For $q=0$, $w_{n}(0,\sigma)\ne0$ if and only if $\sigma$ is the identity permutation. Thus, the set $\{\sigma\in{\mathcal S}_{n}:\, w_{n}(0,\sigma)\ne0\}$ contains exactly one element.

$\bullet$ For $q\ne0$, $w_{n}(q,\sigma)\ne0$ for any $\sigma\in{\mathcal S}_{n}$ whenever the Fock space is either the standard $q-$, $(q,t)-$, or $(q,\{\tau_n\})-$Fock space. In other words, whenever the creation--annihilation operators verify either the standard $q-$CR, $(q,t)-$CR, or $(q,\{\tau_n\})-$CR, the set $\{\sigma\in {\mathcal S}_{n}:\, w_{}(q,\sigma) \ne0:\}$ exactly equals ${\mathcal S}_{n}$ and contains $n!$ elements.

$\bullet$ If $q\ne0$ and if the Fock space is the $({\bf q},2)-$Fock space (equivalently, the creation--annihilation operators verify the $({\bf q},2)-$CR), then $w_{n}(q,\sigma)\ne0$ for any $\sigma\in{\mathcal S}_n$ when $n\le 2$; in the case of $n>2$, $w_{n}(q,\sigma)\ne0$ if and only if $\sigma\in{\mathcal S}_{n,1}:=\{\sigma\in{\mathcal S}_{n}: \sigma(1)=1\}$. It is evident that ${\mathcal S}_{n,1}$ contains $(n-1)!$ elements.

To further comprehend the novelty of the $({\bf q},2)-$Fock space from a combinatorial perspective, we denote, as customary, for any $n\in{\mathbb N}^*$,
\begin{align*}
&\{-1,1\}^n:=\text{the set of all $\{-1,1\}-$valued function defined on }\{1,\ldots,n\}\\
&\{-1,1\}^{2n}_+:=\Big\{\varepsilon\in\{-1,1\}^{2n}:\,
\sum_{h=1}^{2n}\varepsilon(h)=0,\, \sum_{h=p}^{2n} \varepsilon(h)\ge0,\,\forall p\in \{1, \ldots,2n\} \Big\}\end{align*}
and 
\begin{align*}
PP(2n):=&\text{the set of all pair partitions of }\{ 1,\ldots,2n\}\\
NCPP(2n):=&\text{the set of all non--crossing pair partitions of }\{ 1,\ldots,2n\}\end{align*}
Where, recall that for any $n\in{\mathbb N}^*$,  $\{(l_h,r_h)\}_{h=1}^n$ is called a pair partition of the set $\{ 1,\ldots,2n\}$, if
\[\big\{l_h,r_h:\,h\in\{1,\ldots,n\}\big\}=\{1,\ldots,2n\};\qquad l_h<r_h,\ \forall h;\qquad l_1<\ldots<l_n\]
In particular, pair partition $\{(l_h,r_h)\}_{h=1}^n$
is {\bf non-crossing} if the following equivalence holds for any $1\le h<k\le n$:
\[l_h<l_k<r_h\,\iff\, l_h<r_k<r_h\]

Introducing further, for any $n\in{\mathbb N}^*$ and $\varepsilon\in \{-1,1\}^{2n}$,
\begin{align*}
PP(2n,\varepsilon):=&\big\{\{(l_h, r_h)\}_{h=1}^n\in PP(2n):\,\varepsilon^{-1}(\{-1\})= \{l_h:\,h\in\{1, \ldots,n\}\}\big\}\\
=& \big\{\{(l_h, r_h)\}_{h=1}^n\in PP(2n):\,\varepsilon(l_h)=-1\text{ and } \varepsilon(r_h)=1\text{ for any }h\in\{1, \ldots,n\}\big\}\end{align*}
It is well-known (see, e.g., \cite{Ac-Lu96} and \cite{Ac-Lu2022a}) that

$\bullet$ the set $PP(2n,\varepsilon)$ is empty if $\varepsilon\in\{-1,1\}^{2n}\setminus\{-1,1\}_+^{2n}$;

$\bullet$ for any $\varepsilon\in\{-1,1\}_+^{2n}$, there exists unique non--crossing element, denoted as $\theta(\varepsilon):=\{(l^\varepsilon_h,r^\varepsilon_h)\}_{h=1}^n$, in the set $PP(2n,\varepsilon)$, i.e., $PP(2n,\varepsilon)\cap NCPP(2n)= \big\{\theta(\varepsilon)\big\}$; moreover, $PP(2n,\varepsilon)$ contains exactly $\prod_{h=1}^n (2h-l^\varepsilon_h)$ elements.

Utilizing these notations and conventions, let's examine, for any $n\in{\mathbb N}^*$, $\varepsilon \in\{-1,1\} _+^{2n}$, and $\{f_1,\ldots,f_{2n}\} \subset {\mathcal H}$, the following generalization of the vector \eqref{comb01}
\begin{align}\label{comb02}
a^{\varepsilon(1)}(f_1)\ldots a^{\varepsilon(2n)}(f_{2n})\Phi
\end{align}
Here, as usual, $a^\epsilon(f):=\begin{cases} a^+(f), &\text{ if }\epsilon=1\\ a(f),&\text{ if } \epsilon=-1  \end{cases}$ for any $\epsilon\in \{-1,1\}\text{ and }f\in{\mathcal H}$.

The vector \eqref{comb02} must take the form (as discussed \cite{AsaYos2020}, \cite{Blitvic}, \cite{BoKumSpe97}, \cite{Bo-Spe91}, \cite{Bo-Spe92}, \cite{YGLu2023a}, \cite{YGLu2023c}, and related references):
\begin{align}\label{comb02a}
\sum_{\{(l_h, r_h)\}_{h=1}^n\in PP(2n,\varepsilon)} w(q,\{(l_h, r_h)\}_{h=1}^n) \prod_{h\in\{1, \ldots,2n\}}\langle f_{l_h},f_{r_h} \rangle \Phi
\end{align}
with $w(q,\{(l_h, r_h)\}_{h=1}^n)\in{\mathbb R}$ for any $\{(l_h, r_h)\}_{h=1}^n\in PP(2n,\varepsilon)$. Moreover, in the case of $q=0$, $w(0,\{(l_h, r_h)\}_{h=1}^n)\ne0$ if and only if $\{(l_h, r_h)\}_{h=1}^n=\{(l_h ^\varepsilon, r_h^\varepsilon)\}_{h=1}^n$, i.e., $\{(l_h, r_h)\}_{h=1}^n$ is the unique non--crossing pair partition in $PP(2n,\varepsilon)$.

For $q\ne0$, $w(q,\{(l_h, r_h)\}_{h=1}^n)\ne0$ for any $\{(l_h, r_h)\}_{h=1}^n\in PP(2n,\varepsilon)$ whenever the Fock space is either the standard $q-$, $(q,t)-$, or $(q,\{\tau_n\})-$Fock space.

When $q\ne0$ and the Fock space is the $({\bf q},2)-$Fock space, $w(q,\{(l_h, r_h)\}_{h=1}^n)\ne0$ if and only if $\{(l_h, r_h)\}_{h=1}^n\in {\mathcal P}_n(\varepsilon)$. The construction of ${\mathcal P}_n(\varepsilon)$ was provided \cite{YGLu2023c} as follows: By defining, for any non--crossing pair partition $\{(l_h,r_h)\}_ {h=1}^n$, the {\it depth} of the pair $(l_h, r_h)$ as the cardinality of the set $\{k:l_k<l_h<r_h<r_k\}$, a generic element of ${\mathcal P}_n(\varepsilon)$ comprises:

$\bullet$ all pairs $(l_h^\varepsilon, r_h^\varepsilon)$ with the depth greater than or equal to $2$, 

$\bullet$ an arbitrary pair partition of the set \[\{l_h^\varepsilon, r_h^\varepsilon:\, h\in\{1,\ldots,n\} \text{ and the depth of the pair }(l_h^\varepsilon, r_h^\varepsilon)<2\}\]

Consequently, when determining moments of the field operator on the $({\bf q},2)-$Fock space, the set ${\mathcal P}_n:=\bigcup_ {\varepsilon\in \{-1,1\} _+^{2n}}{\mathcal P}_n(\varepsilon)$ serves a similar role to the set $PP(2n)$ (respectively, $NCPP(2n)$) in determining moments of the field operator on the $q-$Fock space, as well as $(q,t)-$ and $(q,\{\tau_n\})-$Fock space (respectively, the full Fock space).

In the previous work \cite{YGLu2023a}, we laid the groundwork for Quan--algebra, the $*-$algebra generated by $\big\{A(f),A^+(g),{\bf p}_k:f,g\in \mathcal{H}\text{ and }k\in \mathbb{N}\big\}$ subject to the $({\bf q},m)-$CR. 
We explored some of its fundamental properties, particularly focusing on Wick's theorem.

In \cite{YGLu2023c} and \cite{YGLu2023d}, we delved into the combinatorial aspects of the $({\bf q},2)-$Fock space. The main goal of those papers was to characterize the sets ${\mathcal P}_n(\varepsilon)$ and ${\mathcal P}_n$. We also revealed a strong connection between the cardinality of these sets of pair partitions and Catalan's convolution formula introduced in \cite{Catalan1887} (for further details, see \cite{FreLac2003}, \cite{GessLac2001}, \cite{Lac2000}, \cite{Tedford2011}, and references therein; particularly, refer to \cite{Regev2012} and its cited references for various proofs).

Now, our attention turns to the probabilistic aspects of $({\bf q},2)-$Fock space. The central result of this paper, Theorem \ref{q2pr30}, provides an explicit formulation of the probability measure $L_{q,f}$, which represents the vacuum distribution of the field operator $Q(f):=A(f)+ A^+ (f)$ on the $({\bf q},2)-$Fock space over a given Hilbert space ${\mathcal H}$ with the test function $f\in {\mathcal H}$.

The symmetric nature of the distribution $L_{q,f}$, as shown in Proposition \ref{q2pr12} (i.e., all odd moments of $L_{q,f}$ are zero), ensures that $L_{q,f}$ is entirely determined by the vacuum distribution of the operator $Q(f)^2$. Furthermore, due to the boundedness of the creation and annihilation operators, as confirmed in Proposition \ref{q2pr12}, this distribution can be fully characterized by its moments or, equivalently, by its moment-generating function. Section \ref{prq2-s2} is primarily dedicated  to calculating the moment-generating function of $Q(f)^2$ with respect to the vacuum state.

Following the derivation of the moment-generating function, our attention shifts to investigating the corresponding distribution $L_{q,f}$. The main result can be roughly formulated as follows: {\it The distribution $L_{q,f}$ is:

$\bullet$ the Dirac measure centred at original if $f=0$;

$\bullet$ the two-point law on the set $\{-\Vert f\Vert,\Vert f\Vert\}$ with the equi--probability ${1\over2}$ if $f\ne0$ and $q=-1$; 

$\bullet$ the semi--circle law over the interval $( -2\Vert f\Vert ,2\Vert f\Vert ) $ if $f\ne0$ and $q=0$;

$\bullet$ an absolutely continuous probability measure with the probability density function $l_{q,f}$ provided in \eqref{q2pr30a} if $f\ne0$ and $q\in\left( -1,\frac{1}{2}\right]$;

$\bullet$ a non--absolutely continuous probability measure given in \eqref{q2pr30b} if $f\ne0$ and $q\in\left(\frac{1}{2},1\right]$.  }

\noindent In fact, among the five conclusions above, the first one (for $f=0$) is trivial, while the second the third are easily proved (see Proposition \ref{q2pr20}). Our main objective in Section \ref{prq2-s3} to prove the fourth and fifth conclusions referenced earlier. 

\section{$({\bf q},2)-$Fock space and the moment-generating function of field operator\\ with respect to the vacuum state} \label{prq2-s2}

\subsection{$({\bf q},2)-$Fock space }\label{prq2-s2-1}

Now, our attention turns to the $({\bf q},2)-$Fock space, which is a specific case of the $({\bf q},m)-$Fock space introduced in \cite{YGLu2023a}. The $({\bf q},m)-$Fock space itself is a type of Interacting Fock Space (see, e.g., \cite{acluvo-kyoto} and \cite{lu95} for a general definition and discussion).

Let $\mathcal{H}$ be a Hilbert space with a dimension greater than or equal to 2 (this assumption will be maintained throughout), equipped with the scalar product $\langle \cdot,\cdot \rangle$. 
Now, let's introduce a sequence of operators  for any $q\in[-1,1]$ as follows:

$\bullet$ The operator $\lambda_1$ is defined as the identity operator on $\mathcal{H}$, symbolized by ${\bf 1}_{\mathcal{H}}$;

$\bullet$ The operator $\lambda_{2}$ is defined on $\mathcal{H}^{\otimes 2}$ by the {\bf linearity} and the following equality:
\begin{align*}
\lambda_2(f\otimes g):=f\otimes g+q g\otimes f\,,\quad\forall f,g\in\mathcal{H}
\end{align*} 

$\bullet$ For any $n\in\mathbb{N}^*$, the operator $\lambda_{n+2}$ is defined as ${\bf 1}_{\mathcal{H}} ^{\otimes n}\otimes \lambda_2$. 

As discussed in \cite{YGLu2023a}, it is straightforward to verify the positivity of $\lambda_n$'s, and thus
\[
\mathcal{H}_n:=\text{ the completion of the } \big(\mathcal{H}^{\otimes n}, \langle \cdot,\lambda_{n} \cdot \rangle_{\otimes n}\big)/Ker\langle \cdot, \lambda_{n}\cdot \rangle_{\otimes n}
\]
forms a Hilbert space, where $\langle \cdot,\cdot \rangle_{\otimes n}$ is the standard tensor scalar product. The scalar product of $\mathcal{H}_n$ is denoted by $\langle \cdot,\cdot \rangle_{n}$, with $\langle \cdot,\cdot \rangle_{1}:=\langle \cdot,\cdot \rangle$ and
\begin{equation*}
\langle F,G \rangle_{n}:=\langle F,\lambda_nG \rangle_{\otimes n} \,,\quad\forall n\ge2\text{ and }F,G\in \mathcal{H}^{\otimes n}
\end{equation*}
or equivalently,
\begin{align*}
\langle F,G\otimes f\otimes g \rangle_{n+2}:=&\langle F,G\otimes f\otimes g\rangle_{\otimes (n+2)}+ q\langle F,G\otimes g\otimes f\rangle_{\otimes (n+2)}\notag\\
&\forall n\in \mathbb{N},\, F\in \mathcal{H}^{\otimes (n+2)},\, G\in \mathcal{H}^{\otimes n}\text{ and }f,g\in \mathcal{H}
\end{align*}

\begin{definition}\label{q2pr10c}Let $\mathcal{H}$ be a Hilbert space, let $\mathcal{H}_n$ be the Hilbert space previously defined for any $n\in \mathbb{N}$, with $\mathcal{H}_{0} :=\mathbb{C}$.

$\bullet$ The Hilbert space $\Gamma_{q,2} (\mathcal{H}) :=\bigoplus_{n=0}^{\infty}\mathcal{H}_n $ is referred to as the {\bf $({\bf q},2)-$Fock space} over $\mathcal{H}$;

$\bullet$ The vector $\Phi:=1\oplus0\oplus0 \oplus\ldots$ is termed as the {\bf vacuum vector} of $\Gamma_{q,2}(\mathcal{H})$;

$\bullet$ For any $n\in\mathbb{N}^*$, $\mathcal{H}_n$ is named as the {\bf $n-$particle space}.
\end{definition}

Throughout, we will use $\langle \cdot,\cdot\rangle$ and $\Vert\cdot\Vert$ to denote the scalar product and the induced norm, respectively, both in $\Gamma_{q,2}(\mathcal{H})$ and in $\mathcal{H}_n$'s if there is no confusion.

It is easy to observe the following {\bf consistency} of $\langle \cdot,\cdot \rangle_{n}$'s: for any $0\ne f\in\mathcal{H}$ and $n\in\mathbb{N}^*$,
\begin{equation*}
\Vert f\otimes F\Vert=0\text{ whenever } F\in\mathcal{H}_{n} \text{ verifies } \Vert F\Vert=0
\end{equation*}
This consistency guarantees that for any $f\in\mathcal {H}$, the operator which maps $F\in\mathcal{H}_{n}$ to
$f\otimes F\in \mathcal{H}_{n+1}$ is a well--defined {\bf linear} operator from $\mathcal{H}_{n}$ to $\mathcal{H}_{n+1}$.

\begin{definition}\label{q2pr11} For any $f\in \mathcal{H}$, the $({\bf q},2)-${\bf creation operator} (with the test function $f$), denoted by $A^+(f)$, is defined as the {\bf linear} operator on $\Gamma_{q,2} (\mathcal{H})$ with the following properties:
\begin{equation*}
A^+(f)\Phi:=f\,,\quad A^+(f)F:=f\otimes F,\quad \forall n\in\mathbb{N}^*\text{ and }F\in\mathcal{H}_n
\end{equation*}
\end{definition}


Recall from Section \ref{q2pr-s1} that for any $\varepsilon\in\{-1,1\}^{2n}_+$, there exists a unique non--crossing pair partition $\{(l^\varepsilon_h, r^\varepsilon_h)\} _{h=1}^n$ in $PP(2n,\varepsilon)$. This uniqueness makes sure that the map $\tau$ from $\{-1,1\}_+^{2n}$ to $NCPP(2n)$ defined by $\tau(\varepsilon):= \{(l^\varepsilon_h, r^\varepsilon_h)\}_{h=1}^n$ is a bijection. 
%
%
Moreover, by further denoting
\begin{align*}
\{-1,1\}^{2n}_{+,*}&:=\big\{\varepsilon\in\{-1,1\}_+ ^{2n},\ \sum_{h=p}^{2n}\varepsilon(h)=0\text{ only for }p=1 \big\}\notag\\
NCPP_*(2n)&:=\big\{\{(l_h,r_h)\}_{h=1}^n\in NCPP(2n):\ r_1=2n\big\}
\end{align*}
the map $\tau$ clearly induces a bijection between $\{-1,1\}^{2n}_{+,*}$ and $NCPP_*(2n)$.

\begin{remark}\label{q2pr16x} Let $n\in{\mathbb N}^*$, the following facts are easily verified.

$\bullet$ For any $\varepsilon\in \{-1,1\}^{2n}_{+,*}$, if we define $\varepsilon'(k):=\varepsilon(k+1)$ for all $k\in\{1,\ldots,2n-2\}$, then it follows that $\varepsilon'\in \{-1,1\}^{2(n-1)}_{+}$. Moreover, $\varepsilon'$ covers $\{-1,1\}^{2(n-1)}_{+}$ as $\varepsilon$ varies over $\{-1,1\}^{2n}_{+,*}$.

$\bullet$ For any $\{(l_h,r_h)\}_{h=1}^n\in NCPP_*(2n) $, if we defined $l'_h:=l_{h+1}-1$ and $r'_h:=r_{h+1}-1$ for all $h\in\{1,\ldots,n-1\}$, then it follows that $\{(l'_h,r'_h)\}_{h=1}^{n-1}\in NCPP(2(n-1))$. Moreover, $\{(l'_h,r'_h)\}_{h=1}^{n-1}$ travels over $NCPP(2(n-1))$ as $\{(l_h,r_h)\}_{h=1}^n$ runs over $NCPP_*(2n)$.\\
As a consequence, we obtain:
\begin{align*}
\big\vert \{-1,1\}^{2n}_{+,*}\big\vert=\big\vert \{-1,1\}^{2(n-1)}_{+}\big\vert=\big\vert NCPP_*(2n) \big\vert=\big\vert NCPP(2(n-1)) \big\vert=C_{n-1}
\end{align*}
Hereinafter, 

$\bullet$ $C_m$ denotes the $m-$th Catalan number for any $m\in{\mathbb N}$;

$\bullet$ $\big\vert S\big\vert$ denotes the cardinality of the set $S$.
\end{remark}

The subsequent results, namely Proposition \ref{q2pr12} and Corollary \ref{q2pr18f}, provide elementary yet fundamental properties of the $({\bf q},2)-$Fock space and the creation--annihilation on it. The proofs will be omitted and can be found in \cite{YGLu2023c} (as well as \cite{YGLu2023a}).

\begin{proposition}\label{q2pr12}Let $\mathcal{H}$ be a Hilbert space  and let $q\in[-1,1]$. For any $f\in\mathcal{H}$, the $({\bf q},2)-$creation operator $A^+(f)$ is bounded:
\begin{align*}
\Vert A^+(f)\Vert =\Vert f\Vert\cdot
\begin{cases} \sqrt{1+q},&\text{ if }q\in[0,1];\\ 1,&\text{ if }q\in[-1,0)  \end{cases}
\end{align*}
Moreover,

1) the {\bf $({\bf q},2)-$annihilation operator} (with the test function $f$) $A(f):=\big(A^+(f) \big)^*$ is well--defined and for any $n\in\mathbb{N}^*$, $\{g_1,\ldots,g_n\}\subset \mathcal{H}$, the following properties hold:
\begin{align*}
A(f)\Phi=0,\quad A(f)(g_1\otimes\ldots\otimes g_n)
=\begin{cases} \langle f, g_1\rangle\Phi, &\text{ if }n=1;\\   \langle f, g_1\rangle g_2+q\langle f, g_2\rangle g_1,&\text{ if }n=2;\\     \langle f, g_1\rangle g_2\otimes\ldots\otimes g_n,&\text{ if }n> 2  \end{cases}
\end{align*}

2) for any $n\in\mathbb{N}^*$ and $f\in\mathcal{H}$,
\begin{align*}
&\Vert A(f)\big\vert_{\mathcal{H}_{1}}\Vert =\Vert A^+(f)\big\vert_{\mathcal{H}_{0}}\Vert =\Vert f\Vert, \quad \Vert A(f)\big\vert_{\mathcal{H}_{n+1}}\Vert =\Vert A^+(f)\big\vert_{\mathcal{H}_n}\Vert\notag\\
&\Vert A(f)A^+(f)\big\vert_{\mathcal{H}_n}\Vert  =\Vert A^+(f)A(f)\big\vert_{\mathcal{H}_n}\Vert =\Vert A(f)\big\vert_{\mathcal{H}_n}\Vert ^2
\end{align*}
and additionally,
\begin{align*}
\Vert A(f)\Vert =\Vert A^+(f)\Vert\,,\quad \Vert A(f)A^+(f)\Vert  =\Vert A^+(f)A(f)\Vert =\Vert A(f)\Vert^2
\end{align*}

3) by denoting, as usual, $A^\epsilon(f):=\begin{cases}
A^+(f),&\text{ if }\epsilon=1\\ A(f),&\text{ if } \epsilon =-1 \end{cases}$ for any $\epsilon\in  \{-1,1\} \text{ and }f\in{\mathcal H}$, the vacuum expectation
\begin{align*}
\big\langle \Phi,A^{\varepsilon(1)}(f_1)\ldots A^{\varepsilon(m)}(f_m)\Phi\big\rangle
\end{align*}
differs from zero only if $m=2n$ (i.e., $m$ is even) and
$\varepsilon\in  \{-1,1\}^{2n}_+$;

4) for any $n\in{\mathbb N}^*$ and $\varepsilon\in  \{-1,1\}^{2n}_+$, and for any $c\in {\mathbb C}$, the following equivalence holds:
\begin{align}
A^{\varepsilon(1)}(f_1)\ldots A^{\varepsilon(2n)} (f_{2n})\Phi=c\Phi\iff\big\langle\Phi,A^{\varepsilon(1)}(f_1)\ldots A^{\varepsilon(2n)}(f_{2n})\Phi\big\rangle=c
\end{align}

5) the $(q,m)-$CR given by \eqref{(q,m)-comm01a} with $m=2$ and \eqref{(q,m)-comm01b} are satisfied.
\end{proposition}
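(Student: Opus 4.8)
The plan is to exploit the grading $\Gamma_{q,2}(\mathcal{H})=\bigoplus_{n\ge0}\mathcal{H}_n$ and reduce every global assertion to a computation on a single $n$-particle space. The one structural identity that drives everything is that, for $n\ge2$, the weight operators factor through the act of prepending a vector: since $\lambda_{n+1}={\bf 1}_{\mathcal H}^{\otimes(n-1)}\otimes\lambda_2$ touches only the last two tensor legs, one checks directly that $\lambda_{n+1}(f\otimes F)=f\otimes(\lambda_nF)$ for every $f\in\mathcal H$ and $F\in\mathcal H^{\otimes n}$. I would record this identity first, since it immediately yields $\Vert f\otimes F\Vert_{n+1}^2=\Vert f\Vert^2\,\Vert F\Vert_n^2$ for all $n\ge2$, and in particular re-derives the consistency property quoted before the statement and shows the creation map descends to the quotient and its completion.

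For the boundedness and the norm formula I would treat the low levels separately. On $\mathcal H_0$ one has $A^+(f)\Phi=f$, so the restricted norm is $\Vert f\Vert$. On $\mathcal H_1$ the relevant quadratic form is $\Vert f\otimes g\Vert_2^2=\Vert f\Vert^2\Vert g\Vert^2+q\,\vert\langle f,g\rangle\vert^2$; maximizing over unit $g$ gives $\Vert f\Vert^2(1+q)$ when $q\ge0$ (attained at $g\parallel f$) and $\Vert f\Vert^2$ when $q<0$ (attained at $g\perp f$). For $n\ge2$ the factorization identity shows $A^+(f)\big\vert_{\mathcal{H}_{n}}$ is exactly $\Vert f\Vert$ times an isometry, hence has norm $\Vert f\Vert$. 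Taking the supremum over $n$ of these three values produces precisely the stated piecewise formula. Item~2) is then almost free: $A(f)\big\vert_{\mathcal{H}_{n+1}}$ is by definition the Hilbert-space adjoint of $A^+(f)\big\vert_{\mathcal{H}_{n}}$, so the two have equal norm, and $A(f)A^+(f)\big\vert_{\mathcal{H}_{n}}=T^*T$ with $T=A^+(f)\big\vert_{\mathcal{H}_{n}}$, whence $\Vert A(f)A^+(f)\big\vert_{\mathcal{H}_{n}}\Vert=\Vert T\Vert^2$ by the $C^*$-identity; the statement for $A^+(f)A(f)$ and the global equalities follow the same way, using that the restricted creation norms are constant in $n$ for $n\ge1$.

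Since $A^+(f)$ is bounded, its adjoint $A(f)$ exists, settling well-definedness in item~1). To obtain the explicit action I would compute $\langle A(f)F,G\rangle=\langle F,f\otimes G\rangle$ on elementary tensors and expand the right-hand side through the definition of $\langle\cdot,\cdot\rangle_n$: for $F=g_1\otimes\cdots\otimes g_n$ the weight $\lambda$ leaves the first leg untouched, so that leg simply contracts $f$ against $g_1$, yielding $\langle f,g_1\rangle$ times the obvious tail, with the extra $q\langle f,g_2\rangle g_1$ term surviving only in the $n=2$ case where $\lambda_2$ symmetrizes the two legs. Items~3) and 4) are then a level-counting argument: reading $A^{\varepsilon(1)}(f_1)\cdots A^{\varepsilon(m)}(f_m)\Phi$ from the right, each $A^+$ raises the particle number by one and each $A$ lowers it (annihilating $\mathcal H_0$ to zero), so a nonzero vector forces every right partial sum $\sum_{h=p}^{m}\varepsilon(h)$ to be nonnegative and the total to vanish, i.e.\ $m=2n$ and $\varepsilon\in\{-1,1\}^{2n}_+$; moreover for such $\varepsilon$ the vector lands in $\mathcal H_0=\mathbb C\Phi$, so it equals $c\Phi$ exactly when $\langle\Phi,\cdot\rangle=c$, giving the equivalence in 4).

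Finally, for item~5) I would verify the two relations by applying both sides to a generic elementary tensor in each $\mathcal H_n$ and invoking the formulas just established. The relation \eqref{(q,m)-comm01a} with $m=2$ splits into three regimes: on $\mathcal H_0$ both the deformation term and the ``back'' term $A^+(g)A(f)$ vanish; on $\mathcal H_1$ the projector ${\bf p}_2$ is the identity on the output, and the $q$-term produced by $A(f)A^+(g)$ is cancelled exactly by $q{\bf p}_2A^+(g)A(f)$; while for $n\ge2$ the vector $A^+(g)A(f)(\cdots)$ again lies in $\mathcal H_n$ with $n\ge2$ and is killed by ${\bf p}_2$, leaving the clean $\langle f,g\rangle$ action. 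The relation \eqref{(q,m)-comm01b} is immediate once one notes that $A(f)$ drops the level by one and the nested ranges of ${\bf p}_{k+1}$ and ${\bf p}_k$ match accordingly. I expect the genuine technical heart to be the norm computation for $A^+(f)$ on $\mathcal H_n$ with $n\ge2$ — pinning down that the deformation contributes nothing beyond level one, and correctly handling the passage to the quotient Hilbert spaces in which some nonzero tensors already have zero norm; once the identity $\lambda_{n+1}(f\otimes F)=f\otimes\lambda_nF$ is in hand, the remaining items reduce to bookkeeping.
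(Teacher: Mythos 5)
The paper does not actually prove this proposition: it states explicitly that the proofs of Proposition \ref{q2pr12} and Corollary \ref{q2pr18f} are omitted and deferred to \cite{YGLu2023c} and \cite{YGLu2023a}, so there is no in-paper argument to compare yours against line by line. That said, your skeleton is the natural one and almost all of it is sound: the identity $\lambda_{n+1}(f\otimes F)=f\otimes\lambda_nF$ for $n\ge2$ does hold and does make $A^+(f)\big\vert_{\mathcal H_n}$ a multiple of an isometry for $n\ge2$; the quadratic form on $\mathcal H_1$ is correctly maximized (using $\dim\mathcal H\ge2$ for the $q<0$ case); the adjoint computation for item~1), the level-counting for items~3)--4), and the three-regime verification of the commutation relations in item~5) are all correct.

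There is, however, one genuine flaw in your treatment of item~2). You assert that ``the restricted creation norms are constant in $n$ for $n\ge1$,'' but your own norm computation contradicts this: $\Vert A^+(f)\big\vert_{\mathcal H_1}\Vert=\Vert f\Vert\sqrt{1+q}$ for $q\in(0,1]$ while $\Vert A^+(f)\big\vert_{\mathcal H_n}\Vert=\Vert f\Vert$ for $n\ge2$ and for $n=0$; the norms are constant only from level $2$ onward (and for all levels only when $q\le0$). This matters because the $C^*$-identity gives $\Vert A(f)A^+(f)\big\vert_{\mathcal H_n}\Vert=\Vert A^+(f)\big\vert_{\mathcal H_n}\Vert^2=\Vert A(f)\big\vert_{\mathcal H_{n+1}}\Vert^2$, whereas the displayed claim is $\Vert A(f)\big\vert_{\mathcal H_n}\Vert^2$; these coincide for $n\ge2$ but not at $n=1$ when $q>0$. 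Indeed a direct computation gives $A(f)A^+(f)g=\Vert f\Vert^2g+q\langle f,g\rangle f$ on $\mathcal H_1$, whose norm is $(1+q)\Vert f\Vert^2$ for $q>0$, while $\Vert A(f)\big\vert_{\mathcal H_1}\Vert^2=\Vert f\Vert^2$. So your argument does not establish the restricted product-norm identities at level one as literally stated (and this level appears to be a genuine exception to the stated formula for $q>0$, which you should at minimum flag rather than paper over with the false constancy claim). The global equalities $\Vert A(f)A^+(f)\Vert=\Vert A^+(f)A(f)\Vert=\Vert A(f)\Vert^2$ are unaffected, since they follow from the $C^*$-identity applied to the unrestricted operators.
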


\begin{corollary}\label{q2pr18f}
Let $m\in\mathbb{N}^*$, $\{ f_1,\ldots,f_m\} \subset\mathcal{H}$ and $\varepsilon\in\{-1,1\} ^m$.

1) If $\sum_{h=p}^m \varepsilon(h)\ge0$ for any $p\in\{1,\ldots,n\}$, then the restriction of the operator $A^{\varepsilon(1)}(f_1)\ldots$ $A^{\varepsilon (m)}(f_m)$ to $\oplus_{r=2}^{\infty}\mathcal{H} _r$ is equal to $a_0^{\varepsilon(1)}(f_1)\ldots a_0^{\varepsilon(m)}(f_m)$ (recall that $a_0^+$ and $a_0$ are the usual {\bf free} creation and annihilation operators respectively).

2) If $\sum_{h=1}^m \varepsilon(h)=0$ (it requires necessarily that $m$ is even), for any $r\in{\mathbb N}$, $0\oplus {\mathcal H}_r\oplus 0$ (in particular, ${\mathbb C}\oplus 0$) is invariant under the action of $A^{\varepsilon(1)}(f_1) \ldots A^{\varepsilon (m)}(f_m)$.

3) In the case of $m=2n$,
\begin{align*}
A^{\varepsilon(1)}(f_1)\ldots A^{\varepsilon(2n)} ( f_{2n})\Phi=\langle A^{\varepsilon(1)}(f_1)\ldots A^{\varepsilon(2n)} ( f_{2n})\rangle\Phi
\end{align*}
and the restriction of the operator $A^{\varepsilon(1)}(f_1)\ldots A^{\varepsilon(2n)} ( f_{2n}) $ to the subspace $\oplus_{r=2}^{\infty} \mathcal{H}_r$ is the multiplier of $\prod_{h=1}^n\langle f_{l^\varepsilon_h}, f_{r^\varepsilon_h}\rangle $.
\end{corollary}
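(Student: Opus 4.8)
The plan is to exploit a single structural observation: comparing Definition \ref{q2pr11} with Proposition \ref{q2pr12}(1), the creation operator $A^+(f)$ coincides with the free creation $a_0^+(f)$ on \emph{every} particle space, while the annihilation $A(f)$ coincides with the free annihilation $a_0(f)$ on $\mathcal{H}_n$ for every $n\neq 2$, the \emph{only} discrepancy being the extra summand $q\langle f,g_2\rangle g_1$ produced when $A(f)$ acts on $\mathcal{H}_2$. A second elementary remark is that each of $A^+(f)$ and $A(f)$ maps each single particle space into a single particle space: $A^+(f)\colon \mathcal{H}_n\to\mathcal{H}_{n+1}$ and $A(f)\colon\mathcal{H}_n\to\mathcal{H}_{n-1}$ (with $\mathcal{H}_{-1}:=0$), since both summands of the $n=2$ formula lie in $\mathcal{H}_1$. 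Thus every operator shifts the particle number by the fixed amount $\varepsilon(h)$, and the only place where the $({\bf q},2)$-dynamics departs from the free dynamics is the transition $\mathcal{H}_2\to\mathcal{H}_1$.

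For part 1 I would argue by downward induction on the index $p$, showing that for an arbitrary $F\in\mathcal{H}_r$ with $r\geq 2$ the partially applied vector $A^{\varepsilon(p)}(f_p)\cdots A^{\varepsilon(m)}(f_m)F$ both equals its free counterpart and lies in $\mathcal{H}_{N_p}$ with $N_p:=r+\sum_{h=p}^{m}\varepsilon(h)\geq r\geq 2$. The base case $p=m+1$ is trivial. In the inductive step a creation keeps us free and raises the particle number; an annihilation is applied to a vector in $\mathcal{H}_{N_{p+1}}$, and the hypothesis $\sum_{h=p}^{m}\varepsilon(h)\geq 0$ forces $N_{p+1}=N_p+1\geq 3$, so the annihilation acts on $\mathcal{H}_{N_{p+1}}$ with $N_{p+1}\geq 3\neq 2$ and therefore coincides with $a_0$. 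Hence the forbidden transition $\mathcal{H}_2\to\mathcal{H}_1$ never occurs and the whole word acts freely on $\bigoplus_{r\geq 2}\mathcal{H}_r$.

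Part 2 is immediate from the fixed-shift remark: the product shifts the particle number by $\sum_{h=1}^{m}\varepsilon(h)=0$, so it sends $\mathcal{H}_r$ into $\mathcal{H}_r$ (any premature annihilation of the vacuum merely produces $0\in\mathcal{H}_r$); in particular $0\oplus\mathcal{H}_r\oplus 0$, and $\mathbb{C}\oplus 0=\mathcal{H}_0$, are invariant. Part 3 then follows by combining parts 1 and 2 with the hypothesis $\varepsilon\in\{-1,1\}^{2n}_+$, which supplies exactly $\sum_{h=1}^{2n}\varepsilon(h)=0$ and $\sum_{h=p}^{2n}\varepsilon(h)\geq 0$ for all $p$. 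For the vacuum identity, part 2 with $r=0$ gives $A^{\varepsilon(1)}(f_1)\cdots A^{\varepsilon(2n)}(f_{2n})\Phi\in\mathbb{C}\Phi$, whence it equals $c\Phi$ with $c=\langle\Phi,A^{\varepsilon(1)}(f_1)\cdots A^{\varepsilon(2n)}(f_{2n})\Phi\rangle$ upon pairing with the unit vector $\Phi$. For the multiplier claim, part 1 reduces the restriction to $\bigoplus_{r\geq 2}\mathcal{H}_r$ to the purely free word $a_0^{\varepsilon(1)}(f_1)\cdots a_0^{\varepsilon(2n)}(f_{2n})$, and it remains to evaluate the latter. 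Here I would induct on $n$: since $\varepsilon(1)=-1$ and $\varepsilon(2n)=1$, some adjacent pair satisfies $\varepsilon(i)=-1,\ \varepsilon(i+1)=1$; because $a_0(f_i)a_0^+(f_{i+1})=\langle f_i,f_{i+1}\rangle$ as free operators, this scalar can be pulled out, peeling off the innermost pair $(i,i+1)$ of $\theta(\varepsilon)$ and leaving a length-$2(n-1)$ word whose sign sequence still lies in $\{-1,1\}^{2(n-1)}_+$ and whose non-crossing partition is $\theta(\varepsilon)$ with $(i,i+1)$ removed. The induction then yields the product $\prod_{h=1}^{n}\langle f_{l^\varepsilon_h},f_{r^\varepsilon_h}\rangle$.

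The genuinely substantive step is the free evaluation in part 3: one must verify that the stack (last-in-first-out) matching induced by successive free creations and annihilations is precisely the unique non-crossing pair partition $\theta(\varepsilon)$, and that the adjacent-pair peeling is compatible with the relabelling $l_h^\varepsilon,r_h^\varepsilon\mapsto$ reduced indices. This is where the condition $\varepsilon\in\{-1,1\}^{2n}_+$ (equivalently the bijection $\tau$ and the reductions in Remark \ref{q2pr16x}) is used essentially; everything else is bookkeeping on the particle number. A minor but necessary point to state carefully is why the conclusion is confined to $r\geq 2$: for $r\in\{0,1\}$ an annihilation can legitimately act on $\mathcal{H}_2$, activating the $q$-correction, so the clean multiplier description genuinely fails there — which is exactly the phenomenon the sets $\mathcal{P}_n(\varepsilon)$ of the introduction are designed to capture.
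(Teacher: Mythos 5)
Your argument is correct. Note that the paper itself does not prove Corollary \ref{q2pr18f} — it explicitly omits the proof and defers to \cite{YGLu2023c} and \cite{YGLu2023a} — so there is no in-paper argument to compare against; your route, which isolates the single deviation from the free dynamics (the $q$-correction in the transition $\mathcal{H}_2\to\mathcal{H}_1$) and then shows by suffix-sum bookkeeping that this transition never occurs when the word starts on $\bigoplus_{r\ge2}\mathcal{H}_r$ under the hypothesis $\sum_{h=p}^{m}\varepsilon(h)\ge0$, is the natural one and all the steps (including the adjacent-pair peeling identifying the free contraction pattern with $\theta(\varepsilon)$) check out. You are also right that part 3 tacitly requires $\varepsilon\in\{-1,1\}^{2n}_+$ (otherwise the first display already fails for $\varepsilon=(1,1)$), a hypothesis the corollary's wording leaves implicit but which you correctly supply.
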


\subsection{The vacuum moments of the field operator}
\label{prq2-s2-2}

From now on, we discuss the moments, the moment-generating function and the distribution of the field operator $Q(f)$ (as well as $Q(f)^2$), all with respect to the vacuum state $\big\langle\Phi, \cdot\Phi\big \rangle$.

To determine the distribution of the field operator $Q(f)$, its boundedness, a property that also applies to both $A(f)$ and $A^+(f)$, simplifies the problem to finding its moments.

When $f=0$, it is trivial that $A(0)=A^+ (0)=0$ and so the distribution of $Q(0)$ is clearly the Dirac measure centred at the original. Therefore, we consider $f\neq0$. As previously mentioned, our initial step is to compute all moments of $Q(f)$.
Furthermore, in light of Proposition \ref{q2pr12}, we observe that all odd moments of field operator $Q(f)$ are equal to zero. Consequently, the distribution of $Q(f)$ is fully determined by its even moments:
\begin{equation}\label{q2pr16}
u_{n}:=\big\langle\Phi, Q(f)^{2n}\Phi\big \rangle =\sum_{\varepsilon\in\{-1,1\}_+^{2n}} \big\langle \Phi,A^{\varepsilon(1)}(f)\ldots A^{\varepsilon(2n) }(f)  \Phi\big\rangle ,\ \ \ n\in\mathbb{N}
\end{equation}
Let's denote furthermore,
\begin{equation}\label{q2pr16b}
v_{0}:=1,\quad v_{n}:=\sum_{\varepsilon\in\{-1,1\}^{2n}_{+,*}} \left\langle \Phi,A^{\varepsilon(1)}( f) \ldots A^{\varepsilon( 2n)}(f) \Phi\right\rangle ,\ \ \ n\in\mathbb{N}^*
\end{equation}

\begin{proposition}\label{q2pr18} $u_n$'s verify the system
\begin{equation}\label{q2pr18b}
u_{n+1}=\sum_{k=1}^{n+1}v_{k}u_{n+1-k}
=\sum_{h=0}^{n}v_{h+1}u_{n-h},\quad \forall n\in\mathbb{N}
\end{equation}
with the initial condition $u_0=1$ and $u_1=\Vert f\Vert^2$. While, $v_n$'s can reformulated as follows:
\begin{align}\label{q2pr18a}
v_{n}=\sum_{\varepsilon\in\{-1,1\} _{+}^{2(n-1) }} \left\langle \Phi,A(f) A^{\varepsilon(1) }(f) \ldots A^{\varepsilon(2n-2)}(f)A^{+}(f) \Phi\right\rangle, \quad \forall n\in\mathbb{N}^*
\end{align}
and in particular, $v_1=\Vert f\Vert ^2$.
\end{proposition}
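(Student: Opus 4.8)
The plan is to dispatch the two assertions in turn, starting with the reformulation \eqref{q2pr18a} of $v_n$, which is essentially a relabelling, and then establishing the recurrence \eqref{q2pr18b} by a first-return decomposition of sign sequences. For the reformulation, I would first record the shape of a generic $\varepsilon\in\{-1,1\}^{2n}_{+,*}$. Writing $S_p:=\sum_{h=p}^{2n}\varepsilon(h)$ for the right partial sums, the defining conditions give $S_1=0$, $S_p\ge 0$ for all $p$, and $S_p=0$ only for $p=1$. Hence $S_2>0$ forces $\varepsilon(1)=S_1-S_2=-1$, and $S_{2n}=\varepsilon(2n)>0$ forces $\varepsilon(2n)=1$; so the outermost operators in each summand of \eqref{q2pr16b} are precisely $A(f)$ (leftmost) and $A^+(f)$ (rightmost). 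Introducing $\varepsilon'(k):=\varepsilon(k+1)$ for $k\in\{1,\dots,2n-2\}$ and invoking the bijection of Remark \ref{q2pr16x} (that $\varepsilon'$ sweeps out $\{-1,1\}^{2(n-1)}_+$ as $\varepsilon$ runs over $\{-1,1\}^{2n}_{+,*}$), a direct re-indexing of the sum yields \eqref{q2pr18a}. The case $n=1$ has empty inner sequence, so $v_1=\langle\Phi,A(f)A^+(f)\Phi\rangle=\Vert f\Vert^2$ by the explicit action recorded in Proposition \ref{q2pr12}.

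For the recurrence, I would start from \eqref{q2pr16} and decompose each $\varepsilon\in\{-1,1\}^{2(n+1)}_+$ at its first return to zero: let $2k+1$ be the smallest index exceeding $1$ with $S_{2k+1}=0$. A parity check shows this index is odd, since a balanced tail has an even number of entries, so $k\in\{1,\dots,n+1\}$ is a genuine block size, with $k=n+1$ corresponding to an empty suffix. Splitting $\varepsilon$ into its prefix $\varepsilon'$ of length $2k$ and suffix $\varepsilon''$ of length $2(n+1-k)$, an elementary inspection of partial sums shows $\varepsilon'\in\{-1,1\}^{2k}_{+,*}$ and $\varepsilon''\in\{-1,1\}^{2(n+1-k)}_+$, and that concatenation inverts the map; hence the decomposition is a bijection $\{-1,1\}^{2(n+1)}_+\cong\bigsqcup_{k=1}^{n+1}\{-1,1\}^{2k}_{+,*}\times\{-1,1\}^{2(n+1-k)}_+$.

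The analytic heart of the argument is then the factorization of the vacuum expectation along this split. Since $\varepsilon''$ is balanced, part 3) of Corollary \ref{q2pr18f} gives $A^{\varepsilon''(1)}(f)\cdots A^{\varepsilon''(2(n+1-k))}(f)\Phi=c''\Phi$ with $c''=\langle\Phi,A^{\varepsilon''(1)}(f)\cdots\Phi\rangle$; applying the likewise balanced prefix block to $c''\Phi$ and using the same corollary once more produces $c'c''\Phi$, where $c'$ is the prefix expectation. Consequently each summand of \eqref{q2pr16} factorizes as a product of the two block expectations, and summing over the bijection separates the sum into $u_{n+1}=\sum_{k=1}^{n+1}v_k u_{n+1-k}$; the second form in \eqref{q2pr18b} is the substitution $h=k-1$. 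Finally $u_0=\langle\Phi,\Phi\rangle=1$, while $u_1=\Vert f\Vert^2$ follows by computing $Q(f)^2\Phi=\Vert f\Vert^2\Phi+f\otimes f$ from Proposition \ref{q2pr12}.

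I expect the one point requiring genuine care to be the factorization together with the claim that the prefix is \emph{primitive} rather than merely balanced: it is precisely the strict positivity $S_p>0$ for $1<p\le 2k$ that places $\varepsilon'$ in $\{-1,1\}^{2k}_{+,*}$ and so pairs the first block with $v_k$ (not $u_k$), which is what makes the two sides of \eqref{q2pr18b} match. The factorization itself is not deep, resting entirely on part 3) of Corollary \ref{q2pr18f}, namely that a balanced product of creation and annihilation operators sends $\Phi$ to a scalar multiple of $\Phi$, but it must be invoked twice, once for each block, to decouple them.
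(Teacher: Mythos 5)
Your proposal is correct and follows essentially the same route as the paper: the relabelling $\varepsilon'(j):=\varepsilon(j+1)$ for \eqref{q2pr18a}, and for \eqref{q2pr18b} a first-return decomposition that, under the bijection $\tau$, is exactly the paper's partition of $NCPP(2(n+1))$ into the classes $\{r_1=2k\}$ with first piece in $NCPP_*(2k)$ and second piece a non-crossing pair partition of the remaining indices. The only difference is presentational — you work with sign sequences rather than pair partitions and you spell out the factorization of the vacuum expectation via Corollary \ref{q2pr18f}, a step the paper leaves implicit.
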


\begin{proof} The statements $u_0=1$ and $u_1=v_1=\Vert f\Vert^2$ directly follow from their respective definitions given in \eqref{q2pr16} and \eqref{q2pr16b}.

By defining $\varepsilon'(j):= \varepsilon(j+1)$ for all $j\in\{1,\ldots, 2n-2\}$, we observe that $\varepsilon'$ traverses over $\{-1,1\}^{2(n-1)}_{+}$ as $\varepsilon$ varies over  $\{-1,1\}^{2n}_{+,*}$. Consequently, \eqref{q2pr18a} is obtained.

The second equality in (\ref{q2pr18b}) is trivial and we will now prove the first.

For any $\left\{(l_{h},r_{h})\right\}_{h=1}^{n}\in NCPP(2n)$, its non--crossing property guarantees that $r_1$ must be an even number. Therefore, we can partition the set $NCPP(2n)$ into $\bigcup\limits_{k=1}^{n}NCPP_k(2n)$, where for any $k\in\left\{ 1,\ldots,n\right\}$, $NCPP_k(2n)$ is a generalization of $NCPP_*(2n)$ defined as:
\[
NCPP_k(2n):=\left\{  \left\{  (l_{h},r_{h}) \right\}_{h=1}^{n}\in NCPP(2n):r_{1}=2k\right\}
\]
The first equality in (\ref{q2pr18b}) is obtained by observing the following easily checked facts:

$\bullet$ the sets $NCPP_k(2n)$'s are pairwise disjoint;

$\bullet$ for any $k=1,\ldots,n$, as $\{(l_h,r_h)\}_{h=1} ^n$ varies over $NCPP_k(2n)$, its {\it first piece} $\{(l_h,r_h)\}_{h=1}^k$ and the {\it second piece} $\{ (l_h,r_h)\}_{h=k+1}^n$ respectively run over the set $NCPP_*(2k)$ and set of all non--crossing pair partitions of the set $\{ 2k+1,\ldots,2n\}$.    \end{proof}

Proposition \ref{q2pr18} establishes a strong dependence of the moments $u_n$'s on the $v_n$'s. The specific characteristics of the $v_n$'s are a key focus of \cite{YGLu2023d}, which articulates the following assertion:
\begin{theorem}\label{q2pr18p}
For any $ n\in\mathbb{N}$ and $f\in\mathcal{H}$,
\begin{align}\label{q2pr18q}
v_{n+1}&=\Vert f\Vert ^{2(n+1)} \sum_{r=0}^n( 1+q)^r \sum_{\substack{i_1,\ldots,i_r\geq0 \\ i_1+\ldots+i_r=n-r}}C_{i_1}\ldots C_{i_r}\notag\\
&=\Vert f\Vert ^{2(n+1)} \sum_{r=0}^n( 1+q)^r
\frac{r}{2n-r}\binom{2n-r}{n}
\end{align}
Hereinafter, 
\begin{equation}\label{q2pr18r}
\sum_{\substack{i_{1},\ldots,i_{r}\geq0\\i_{1}+\ldots
+i_{r}=n-r}}C_{i_{1}}\ldots C_{i_{r}}\left(  1+q\right) ^{r}\Big\vert_{r=0}:=\delta_{n,0}:=\begin{cases}
1, &\text{if }n=0\\ 0, &\text{if }n\ne0
\end{cases}
\end{equation}
\end{theorem}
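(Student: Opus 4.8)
The plan is to reduce the computation of $v_{n+1}$ to a weighted count of lattice paths, exploiting the crucial fact that in the definition of $v_{n+1}$ every test function is the \emph{same} vector $f$. Starting from the reformulation \eqref{q2pr18a}, I would first observe that $A^+(f)$ and $A(f)$ preserve each ray $\mathbb{C}\,f^{\otimes k}\subset\mathcal{H}_k$: by Definition \ref{q2pr11} one has $A^+(f)f^{\otimes k}=f^{\otimes (k+1)}$, while Proposition \ref{q2pr12}(1) gives $A(f)f^{\otimes k}=\Vert f\Vert^2 f^{\otimes(k-1)}$ for $k=1$ and for $k\ge 3$, and $A(f)(f\otimes f)=(1+q)\Vert f\Vert^2 f$ at level $k=2$. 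Consequently, for each fixed $\varepsilon$ the summand $\big\langle\Phi,A(f)A^{\varepsilon(1)}(f)\cdots A^{\varepsilon(2n)}(f)A^+(f)\Phi\big\rangle$ is a purely scalar quantity: it equals the product of the coefficients picked up along the induced path of particle numbers, and by Corollary \ref{q2pr18f} and Proposition \ref{q2pr12}(3) it is nonzero exactly when the full sign word lies in $\{-1,1\}^{2(n+1)}_{+,*}$.

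Next I would isolate where the deformation enters. Every one of the $n+1$ annihilations contributes exactly one factor $\Vert f\Vert^2$, accounting for the global prefactor $\Vert f\Vert^{2(n+1)}$; the \emph{only} coefficient that differs from $\Vert f\Vert^2$ is the one produced by annihilating a two-particle vector, where an extra factor $(1+q)$ appears. Hence each admissible summand equals $\Vert f\Vert^{2(n+1)}(1+q)^{d(\varepsilon)}$, where $d(\varepsilon)$ is the number of down-steps from level $2$ to level $1$ along the path. Using the bijection $\tau$ of Remark \ref{q2pr16x}, the word in $\{-1,1\}^{2(n+1)}_{+,*}$ corresponds to a Dyck path of semilength $n+1$ touching the axis only at its endpoints; stripping the forced initial up-step and final down-step and shifting down by one leaves an arbitrary Dyck path $D$ of semilength $n$. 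Under this shift a level-$2$-to-$1$ down-step becomes a return of $D$ to the axis, so $d(\varepsilon)$ equals the number of irreducible (prime) components of $D$.

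The combinatorial count is then standard. A Dyck path of semilength $n$ with exactly $r$ irreducible components splits as a concatenation of $r$ prime paths of semilengths $i_s+1$ with $\sum_{s=1}^r(i_s+1)=n$; since a prime path of semilength $i_s+1$ is counted by $C_{i_s}$, the number of such $D$ is $\sum_{i_1+\cdots+i_r=n-r}C_{i_1}\cdots C_{i_r}$. Summing $(1+q)^{r}$ over all Dyck paths of semilength $n$ grouped by their number of components yields precisely the first line of \eqref{q2pr18q}, the $r=0$ term reproducing the empty-path convention \eqref{q2pr18r}. Finally, the second equality in \eqref{q2pr18q} is the Catalan convolution identity $\sum_{i_1+\cdots+i_r=n-r}C_{i_1}\cdots C_{i_r}=\frac{r}{2n-r}\binom{2n-r}{n}$, which I would invoke from \cite{Catalan1887}.

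The main obstacle I expect is the justification in the first two paragraphs, namely establishing rigorously that the vacuum expectation factorizes as a scalar product of local coefficients and that the \emph{sole} source of a nontrivial weight is the level-$2$ annihilation. This requires checking that the deformed scalar products $\langle\cdot,\cdot\rangle_k$ on the spaces $\mathcal{H}_k$ do not interfere with the bookkeeping along the ray $\mathbb{C}\,f^{\otimes k}$, and that the admissibility condition $\varepsilon\in\{-1,1\}^{2(n+1)}_{+,*}$ is exactly what keeps the path at level $\ge 1$ strictly between its endpoints; once the weighted Dyck-path model is in place, the remaining enumeration is routine.
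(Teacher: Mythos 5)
Your argument is correct, but note that the paper itself contains no proof of Theorem \ref{q2pr18p}: it is imported verbatim from the companion work \cite{YGLu2023d}, so there is nothing in this text to compare against line by line. Judged on its own merits, your weighted-Dyck-path derivation is sound and complete. The key reduction is legitimate: since all test functions equal $f$, Proposition \ref{q2pr12}(1) shows the word of operators in \eqref{q2pr18a} acts along the rays $\mathbb{C}f^{\otimes k}$, each annihilation contributing $\Vert f\Vert^2$ and an extra $(1+q)$ exactly at level $2$; the starred admissibility condition (which is imposed by the definition of $v_{n+1}$ itself, not something you need to re-derive from Proposition \ref{q2pr12}(3) --- that item only gives the unstarred set $\{-1,1\}^{2(n+1)}_{+}$, so your ``exactly when'' is a slight overstatement, though harmless) says the particle-number path is a prime Dyck path of semilength $n+1$; stripping the forced first and last steps turns level-$2$-to-$1$ down-steps into returns to the axis, i.e.\ into irreducible components of a Dyck path of semilength $n$; and counting paths with $r$ components via $C_{i_1}\cdots C_{i_r}$, $\sum(i_s+1)=n$, gives the first line of \eqref{q2pr18q}, with the $r=0$ term matching convention \eqref{q2pr18r}. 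The second line is then Catalan's convolution formula, correctly invoked. Your worry about the deformed scalar products $\langle\cdot,\cdot\rangle_k$ is already discharged by the paper: Proposition \ref{q2pr12}(1) states the action of $A(f)$ on elementary tensors directly, and the quotient by null vectors (relevant only at $q=-1$, where $(1+q)^{d}=0$ anyway) does not disturb the bookkeeping, while Corollary \ref{q2pr18f}(3) guarantees the final vector is a scalar multiple of $\Phi$ so the expectation is that scalar. In short: a correct, self-contained proof of a result the paper only cites.
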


\subsection{The moment-generating function of $\mathbf{Q(f) ^2}$}  \label{prq2-s2-4}

Due to the boundedness of the field operator, for any $f\in\mathcal{H},$ there exists a $\epsilon_f>0$ such that the series
\begin{equation*}
\sum_{n=0}^\infty x^nu_n=\sum_{n=0}^\infty x^n\langle\Phi,Q(f)^{2n}\Phi\rangle
\end{equation*}
converges when $x\in(-\epsilon_f,\epsilon_f)$. The function $T_q:(-\epsilon_{f},\epsilon_f)  \longmapsto\mathbb{R}$ defined by the above series is commonly referred to as the moment-generating function of $Q(f)^2$. Consequently, the function $x\longmapsto T_q(x^2)$ is the moment-generating function of $Q( f)$. Moreover, the boundedness of the field operator ensures that the distribution of $Q(f)^2$ is fully determined by its moment-generating function $T_q$.

\begin{theorem}\label{q2pr19} For any $f\in\mathcal {H}$, by denoting $\epsilon:=\min\big\{  \frac{1} {4\Vert f\Vert ^2},\epsilon_f\big\} $, we have:
\begin{equation}\label{q2pr19a}
T_q(x) =\frac{1-q+(1+q)\sqrt{1-4\Vert f\Vert ^2x}} {1-q+(1+q)\sqrt{1-4\Vert f\Vert^2x}-2\Vert f\Vert ^2x}, \qquad \forall x\in( -\epsilon,\epsilon)
\end{equation}
\end{theorem}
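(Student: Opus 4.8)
The plan is to pass to generating functions and exploit the convolution structure of the recurrence in Proposition~\ref{q2pr18}. Writing $U(x):=T_q(x)=\sum_{n\ge0}u_nx^n$ and $V(x):=\sum_{k\ge1}v_kx^k$, the identity $u_{n+1}=\sum_{k=1}^{n+1}v_ku_{n+1-k}$ says precisely that the coefficient of $x^{n+1}$ in the product $V(x)U(x)$ equals $u_{n+1}$; hence $V(x)U(x)=U(x)-u_0=U(x)-1$, and since $u_0=1$ this rearranges to
\begin{equation*}
T_q(x)=U(x)=\frac{1}{1-V(x)}.
\end{equation*}
Thus the whole problem reduces to computing the single generating function $V(x)$ in closed form and checking that $1-V(x)\neq0$ on $(-\epsilon,\epsilon)$.

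To compute $V$, I would feed in the explicit expression for $v_{n+1}$ from Theorem~\ref{q2pr18p}, using the \emph{first} (Catalan-product) form, which is tailored to generating-function manipulation. Setting $y:=\Vert f\Vert^2x$ and recalling the Catalan generating function $c(z):=\sum_{i\ge0}C_iz^i=\frac{1-\sqrt{1-4z}}{2z}$, the inner sum $\sum_{i_1+\cdots+i_r=n-r}C_{i_1}\cdots C_{i_r}$ is exactly the coefficient of $z^{n-r}$ in $c(z)^r$. Substituting and interchanging the order of the summations over $n$ and $r$, the sum over $n$ collapses, for each fixed $r\ge1$, to $y^{r+1}c(y)^r$ by evaluating the power series $c(z)^r$ at $z=y$; the $r=0$ term contributes $y$ by the convention \eqref{q2pr18r}. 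Summing the resulting geometric series in $r$ then yields
\begin{equation*}
V(x)=\frac{y}{1-(1+q)\,y\,c(y)}.
\end{equation*}

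The remaining work is purely algebraic. Since $(1+q)\,y\,c(y)=\tfrac12(1+q)\bigl(1-\sqrt{1-4y}\bigr)$, one gets $1-(1+q)\,y\,c(y)=\tfrac12\bigl[(1-q)+(1+q)\sqrt{1-4y}\bigr]$, whence
\begin{equation*}
V(x)=\frac{2y}{(1-q)+(1+q)\sqrt{1-4y}}.
\end{equation*}
Plugging this into $T_q=1/(1-V)$, clearing denominators, and restoring $y=\Vert f\Vert^2x$ reproduces \eqref{q2pr19a} directly.

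I expect the main obstacle to be analytic rather than algebraic: justifying the interchange of summation order and, above all, the convergence of the geometric series $\sum_{r\ge1}\bigl[(1+q)\,y\,c(y)\bigr]^r$. This requires $\bigl|(1+q)\,y\,c(y)\bigr|<1$ throughout the interval, and it is precisely here that the choice $\epsilon=\min\{\tfrac{1}{4\Vert f\Vert^2},\epsilon_f\}$ enters: the bound $\tfrac{1}{4\Vert f\Vert^2}$ forces $|y|<\tfrac14$, which both makes $\sqrt{1-4y}$ real and keeps $c(y)$ within its disc of convergence, giving $(1+q)\,y\,c(y)=\tfrac12(1+q)\bigl(1-\sqrt{1-4y}\bigr)$ a modulus strictly below $1$ (using $q\le1$), while $\epsilon_f$ guarantees convergence of the defining moment series for $T_q$. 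An alternative, perhaps cleaner, route is to establish all of the above as identities of convergent power series on $(-\epsilon,\epsilon)$ and finish via the identity theorem, thereby sidestepping term-by-term convergence subtleties; I would keep this in reserve as a fallback. A final verification that $1-V(x)\neq0$, equivalently that the denominator in \eqref{q2pr19a} does not vanish, completes the argument.
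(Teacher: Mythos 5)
Your proposal is correct and follows essentially the same route as the paper: the paper likewise turns the recurrence \eqref{q2pr18b} into the functional identity $T_q(x)=1+T_q(x)\sum_{h\ge0}x^{h+1}v_{h+1}$ (your $U=1+UV$), evaluates $\sum_k v_kx^k$ by inserting the Catalan-product form of Theorem \ref{q2pr18p}, using \eqref{q2pr19d} to recognize powers of the Catalan generating function and summing the resulting geometric series in $r$, and then solves the linear equation for $T_q$. Your explicit isolation of $V(x)=\frac{2y}{(1-q)+(1+q)\sqrt{1-4y}}$ and your remarks on where $\epsilon=\min\{\frac{1}{4\Vert f\Vert^2},\epsilon_f\}$ is needed are only cosmetic refinements of the same argument.
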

\begin{proof} 
It follows from the definition of $T_q$ that
\begin{align*}
T_q(x)=&\sum_{n=0}^\infty x^nu_n =1+\sum_{n=0} ^\infty x^{n+1}u_{n+1}\overset{\eqref{q2pr18b}}{=}1+\sum_{n=0} ^\infty x^{n+1}\sum_{h=0}^n v_{h+1}u_{n-h}\notag\\
=&1+\sum_{h=0}^\infty  x^{h+1} v_{h+1}\sum_{n=h}^\infty x^{n-h}u_{n-h}=1+T_q(x)\sum_{h=0}^\infty x^{h+1}v_{h+1}\\
\overset{\eqref{q2pr18q}}{=} &1 +T_q(x)\sum_{n=0}^\infty x^{n+1}\left\Vert f\right\Vert ^{2\left(  n+1\right)  } \sum_ {r=0}^{n}\left(  1+q\right) ^{r}\sum_{\substack{i_{1}, \ldots,i_{r}\geq0 \\i_{1}+\ldots+i_{r}=n-r}}C_{i_{1}}\ldots C_{i_{r}}
\end{align*}
By employing the convention \eqref{q2pr18r} in the above formula, we find that
\begin{align}\label{q2pr19b}
T_q(x)= 1+x\Vert f\Vert^2T_q(x)\bigg(&1+\sum_{n=1}
^\infty \big(x\Vert f\Vert^2\big) ^{n} \sum_ {r=1}^n (1+q)^r\sum_{\substack{i_1,\ldots,i_r\geq0\\ i_1 +\ldots +i_r=n-r}}C_{i_1}\ldots C_{i_r} \bigg)\notag\\
=1+x\Vert f\Vert^2 T_q(x) \bigg( 1+&\sum_{r=1}^{\infty} \big(x\Vert f\Vert^2(1+q)\big) ^r\notag \\
\cdot &\sum_{n=r}^\infty\big(x\Vert f\Vert^2\big)^{ n-r}\sum_{\substack{i_1,\ldots,i_r\geq0\\ i_1+\ldots +i_r =n-r}}C_{i_1}\ldots C_{i_r}\bigg)
\end{align}
Thanks to the following well--known facts:

$\bullet$ the generating function of the Catalan's sequence $\{C_n\}_{n=0}^{\infty}$ is
\begin{equation}\label{q2pr19c}
C(x):=\sum_{n=0}^{\infty}C_nx^n=\frac{1-\sqrt{1-4x}}{2x}=\frac{2}{1+\sqrt{1-4x}},\quad\forall x\in\Big(-\frac{1}{4}, \frac{1}{4}\Big)
\end{equation}

$\bullet$ for any $m\in\mathbb{N}^*$ and $\{  \alpha_{k}\}_{k=0}^{\infty}\subset{\mathbb C}$,
\begin{equation}\label{q2pr19d}
\Big(  \sum_{k=0}^{\infty}\alpha_kx^k\Big)^m= \sum_{k=0}^\infty x^k\sum_{\substack{i_1,\ldots,i_m\ge0 \\ i_1+\ldots+i_m=k}}\alpha_{i_1}\ldots\alpha_{i_m}
\end{equation}\\
we know that for any $x\in\big(  -\frac{1}{4\Vert f\Vert ^2},\frac{1}{4\Vert f\Vert ^{2}}\big) $,
\begin{align*}
&\sum_{n=r}^{\infty}\big(x\Vert f\Vert ^2\big)^{ n-r } \sum_{\substack{i_1,\ldots, i_r\ge0 \\ i_1+\ldots+i_r=n-r}}C_{i_1}\ldots C_{i_r} \overset{k:=n-r}=\sum_{k=0}^{\infty}\big(x\Vert f\Vert ^2\big)^k  \sum_{i_1,\ldots, i_r\ge0 \atop i_1+\ldots+i_r=k}C_{i_1}\ldots C_{i_r} \\
\overset{\eqref{q2pr19d}}{=} &\Big(\sum_{k=0}^\infty \big(x\Vert f\Vert ^2\big)^kC_k\Big)^r
\overset{\eqref{q2pr19c}}{=}\frac{\Big(1-\sqrt{1-4\Vert f\Vert^2x}\big)^r} {\big(2x\Vert f\Vert ^{2}\big)^r}
\end{align*}
Thanks to this equality, (\ref{q2pr19b}) simplifies to the following: for any $x\in (-\epsilon,\epsilon)$, 
\begin{align*}
T_q(x) =&1+x\Vert f\Vert ^2 T_q(x)\bigg( 1+\sum_{r=1} ^{\infty} \big(x\Vert f\Vert^2(1+q)\big)^r
\frac{ \Big(1-\sqrt{1-4\Vert f\Vert^2x}\Big)^r } {\big(2x\Vert f\Vert^2 \big)^r} \bigg)\notag\\
=&1+\frac{2x\Vert f\Vert^2 T_q(x)}{2-(1+q) \big(1-\sqrt{1-4\Vert f\Vert^2x}\big) }
\end{align*}
Thus, \eqref{q2pr19a} is obtained by resolving this equation.    \end{proof}

\begin{remark}\label{q2pr19ax} For $q=0$, one gets
\begin{align*}
T_0(x) &=\frac{1+\sqrt{1-4\Vert f\Vert ^2x}} {1+\sqrt{1-4\Vert f\Vert^2x}-2\Vert f\Vert ^2x}\\
&=
\frac{\big(1+\sqrt{1-4\Vert f\Vert ^2x}\big)\cdot
\big(1-2\Vert f\Vert ^2x-\sqrt{1-4\Vert f\Vert^2x}\big)} {\big(1-2\Vert f\Vert ^2x\big)^2-\big(1-4\Vert f\Vert^2x\big)}=\frac{1-\sqrt{1-4\Vert f\Vert ^2x}}
{2\Vert f\Vert ^2x}
\end{align*}
which exactly presents the moment-generating function of $\xi^2$ with $\xi$ being a random variable distributed in semi--circle law over the interval $( -2\Vert f\Vert ,2\Vert f\Vert )$.
\end{remark}   
%

\section{Distribution of the field operator}
\label{prq2-s3}

We commence by presenting the following simple result, which furnishes the distribution of the field operator for $q\in\{-1,0\}$.

\begin{proposition}\label{q2pr20} For any $0\ne f\in\mathcal{H}$, the distribution of $Q(f)$ is

1) the two-point law on the set $\{-\Vert f\Vert,\Vert f\Vert\}$ with the equi--probability ${1\over2}$ if $q=-1$; 

2) the semi--circle law over the interval $( -2\Vert f\Vert ,2\Vert f\Vert )  $ if  $q=0$.
\end{proposition}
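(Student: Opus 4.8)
The plan is to read off both distributions directly from the moment-generating function $T_q$ of $Q(f)^2$ furnished by Theorem \ref{q2pr19}, using two structural facts from Proposition \ref{q2pr12}: that $Q(f)$ is bounded, so its vacuum law is uniquely pinned down by its moments (equivalently by $T_q$), and that $Q(f)$ is symmetric, so that all odd moments vanish and the law of $Q(f)$ is completely controlled by the law of $Q(f)^2$.

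For $q=0$ I would simply substitute into \eqref{q2pr19a} and simplify exactly as in Remark \ref{q2pr19ax}, obtaining $T_0(x) = \frac{1 - \sqrt{1 - 4\Vert f\Vert^2 x}}{2\Vert f\Vert^2 x}$, which is already identified there as the moment-generating function of $\xi^2$ for $\xi$ semi--circular on $(-2\Vert f\Vert, 2\Vert f\Vert)$. Determinacy of the moment problem (boundedness) then yields that the vacuum law of $Q(f)^2$, hence by symmetry of $Q(f)$, is this semi--circle law.

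For $q=-1$ I would exploit that the factor $(1+q)$ annihilates every square--root term in \eqref{q2pr19a}, collapsing it to $T_{-1}(x) = \frac{2}{2 - 2\Vert f\Vert^2 x} = \frac{1}{1 - \Vert f\Vert^2 x} = \sum_{n\ge 0}\Vert f\Vert^{2n} x^n$. Reading off coefficients gives $u_n = \big\langle\Phi, Q(f)^{2n}\Phi\big\rangle = \Vert f\Vert^{2n}$, the moments of the Dirac mass at $\Vert f\Vert^2$; uniqueness then forces $Q(f)^2 = \Vert f\Vert^2$ in the vacuum state, and combining this with symmetry forces the law of $Q(f)$ to put mass $\tfrac12$ at each of $\pm\Vert f\Vert$. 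As an alternative route avoiding $T_q$, I note that for $q=-1$ the factor $(1+q)^r$ in Theorem \ref{q2pr18p} kills every $r\ge1$ term, so by the convention \eqref{q2pr18r} one has $v_1=\Vert f\Vert^2$ and $v_n=0$ for $n\ge2$; inserting this into the recursion \eqref{q2pr18b} yields $u_{n+1}=\Vert f\Vert^2 u_n$, and again $u_n=\Vert f\Vert^{2n}$.

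The step I expect to need the most care is the passage from the law of $Q(f)^2$ to that of $Q(f)$: here I would invoke symmetry (the vanishing of all odd moments, Proposition \ref{q2pr12}) to lift a law on $[0,\infty)$ to a symmetric law on $\mathbb{R}$, and boundedness to ensure that the moment problem is determinate so that this lift is unique. The remaining work is only the routine algebraic simplification of \eqref{q2pr19a} at the two values $q\in\{-1,0\}$.
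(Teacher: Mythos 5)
Your proposal is correct and follows essentially the same route as the paper: substitute $q=-1$ and $q=0$ into the formula \eqref{q2pr19a} for $T_q$, identify the resulting functions as the moment-generating functions of $\xi^2$ for the two-point and semi-circle laws respectively, and use boundedness (determinacy) plus symmetry to conclude. Your extra care about lifting the law of $Q(f)^2$ to that of $Q(f)$, and your alternative derivation of $u_n=\Vert f\Vert^{2n}$ from the recursion \eqref{q2pr18b}, are sound but only make explicit what the paper's terse proof leaves implicit.
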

\begin{proof} For $q\in\{-1,0\}$, Theorem \ref{q2pr19} ensures that the moment-generating function of $Q(f)^2$ are, respectively, 
\begin{equation*}
T_{-1}(x)=\frac{1}{1-\Vert f\Vert ^2x};\qquad T_0(x)=
\frac{1-\sqrt{1-4\Vert f\Vert ^2x}}
{2\Vert f\Vert ^2x}
\end{equation*}
Thus, the thesis is established.
%
%
\end{proof}

Moving forward, our attention shifts to deducing the distribution of $Q(f)$ for $q\in(  -1,1]  \setminus\{0\}$. For technical clarity, we opt to introduce a new parameter $a:=1+q$. With this new parameter, the moment-generating function of $Q(f)^2$ assumes the following reformulation of $T_q$:
\[x\longmapsto\frac{2-a+a\sqrt{1-4\Vert f\Vert ^2x}} {2-a+a\sqrt {1-4\Vert f\Vert ^2x}-2\Vert f\Vert ^2x}
\]
Specifically, for $f\in {\mathcal H}$ such that $\Vert f\Vert=\frac{1}{2}$, the aforementioned moment-generating function is
\begin{align}\label{q2pr20d}
S_a(x)  :=&\frac{2-a+a\sqrt{1-4\Vert f\Vert ^2x}} {2-a+a\sqrt {1-4\Vert f\Vert ^2x}-2\Vert f\Vert ^2x}\Bigg\vert_{\Vert f\Vert={1\over2}}\notag\\
=&\frac{16( a-1)  -2(2a^2+a-2)  x+2ax\sqrt{1-x}} {16(a-1)  -4(a-1)(  a+2)  x-x^2}
\end{align}
This function serves as the starting point of our investigation of the distribution of field operator.
In our subsequent discussion, the sphere in $\mathcal{H}$ comprising elements with the norm $\frac{1}{2}$ will play a significant role, denoted as $\mathcal{H}_{1/2}$.

We will commence with some technical preliminaries, followed by an effort to determine the distribution of $Q(  f)^2$ with $f\in\mathcal{H}_{1/2}$. This investigation will be carried out separately in two distinct parts: one for $a\in(1,2]$ (equivalently, $q\in(0, 1]$) and the other for $a\in(  0,1)$ (equivalently, $q\in(-1,0)$). As an application, we will ultimately present the distribution of $Q(  f)$ for any $f$ and $q$. \bigskip

\subsection{Some technical preliminaries}        \label{q2pr-s2-0}

The following measurable function is important for computing the distribution of $Q(f)^2$:
\begin{equation}\label{q2pr20e}
g_a(x)  :=\frac{\sqrt{1-x}}{\sqrt{x}\Big(x^2- \frac{a+2} {4}x-\frac{1}{16(a-1)  }\Big)  }\chi_{(0,1) }( x)
,\quad \forall x\in\mathbb{R}
\end{equation}
where, $a\in (0,1)\cup(1,2]$.

\begin{lemma}\label{q2pr22} Let, for any $a\in (0,1)\cup(1,2]$,
\begin{align}\label{q2pr20h}
h_a(x):=x^{2}-\frac{a+2}{4}x-\frac{1}
{16(a-1)}, \quad \forall x\in \mathbb{R}
\end{align}
then

1) if $a\in (0,1)$, we have $h_a>0$ (so $g_a\ge 0$ and $g_a\Big\vert_{(0,1)}>0$);

2) if $a\in (1,2]$, we have $h_a<0$ on the interval $(0,1)$ (so $g_a\le 0$ and $g_a\Big\vert_{(0,1)}< 0$).
\end{lemma}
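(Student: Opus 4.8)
The plan is to analyze the quadratic $h_a(x)=x^2-\frac{a+2}{4}x-\frac{1}{16(a-1)}$ by locating its roots relative to the interval $(0,1)$ and reading off the sign from the constant term and from $h_a(0)$, $h_a(1)$. The key algebraic object is the discriminant together with the values at the endpoints. Since $h_a$ opens upward, its behavior on $(0,1)$ is entirely governed by whether $(0,1)$ lies between the roots (giving $h_a<0$ there) or outside the root-interval (giving $h_a>0$ there), and this in turn is detected by the signs of $h_a(0)$ and $h_a(1)$.

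First I would compute the endpoint values explicitly. We have $h_a(0)=-\frac{1}{16(a-1)}$, whose sign is positive when $a\in(0,1)$ (since $a-1<0$) and negative when $a\in(1,2]$. Next I would compute $h_a(1)=1-\frac{a+2}{4}-\frac{1}{16(a-1)}=\frac{2-a}{4}-\frac{1}{16(a-1)}$; putting this over a common denominator gives $h_a(1)=\frac{4(2-a)(a-1)-1}{16(a-1)}$, and the numerator $4(2-a)(a-1)-1=-4a^2+12a-8-1=-(4a^2-12a+9)=-(2a-3)^2$, so $h_a(1)=\frac{-(2a-3)^2}{16(a-1)}$.

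For part 1 ($a\in(0,1)$): here $a-1<0$, so $h_a(0)>0$ and $h_a(1)=\frac{-(2a-3)^2}{16(a-1)}\ge0$ as well, with $h_a(1)=0$ only at $a=3/2\notin(0,1)$, hence $h_a(1)>0$. Since both endpoints are positive, I must rule out the quadratic dipping below zero strictly inside $(0,1)$. The cleanest way is to locate the vertex at $x^*=\frac{a+2}{8}$, note $x^*\in(1/4,3/8)\subset(0,1)$ for $a\in(0,1)$, and check the minimum value $h_a(x^*)=-\frac{(a+2)^2}{64}-\frac{1}{16(a-1)}$; I would show this is positive by combining the terms over $64(a-1)$ and verifying the numerator $-(a+2)^2(a-1)-4>0$ for $a\in(0,1)$, e.g. by expanding to a cubic in $a$ and checking it is positive on $(0,1)$. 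This step — confirming the global minimum stays positive — is the main obstacle, since endpoint positivity alone does not suffice for an upward parabola.

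For part 2 ($a\in(1,2]$): now $a-1>0$, so $h_a(0)=-\frac{1}{16(a-1)}<0$, and $h_a(1)=\frac{-(2a-3)^2}{16(a-1)}\le0$ with equality only at $a=3/2$. Since the parabola opens upward and takes a negative (or zero) value at both endpoints, and since an upward parabola can be nonnegative at an interior point only if it is nonnegative at an endpoint on each side, the only concern is the boundary case $a=3/2$ where $h_a(1)=0$; but $x=1$ is the endpoint, not interior to $(0,1)$, so $h_a<0$ on the open interval $(0,1)$ in all cases. More directly, $h_a(0)<0$ and $h_a$ opening upward means the smaller root is negative; then $h_a<0$ on $(0,1)$ follows once the larger root is shown to be $\ge1$, which is exactly the content of $h_a(1)\le0$. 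The corresponding sign statements for $g_a$ then follow immediately from \eqref{q2pr20e}, since on $(0,1)$ the prefactor $\frac{\sqrt{1-x}}{\sqrt{x}}$ is strictly positive and $g_a$ inherits the sign of $1/h_a(x)$.
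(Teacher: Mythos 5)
Your strategy is essentially the paper's: locate the vertex of the upward parabola at $x^*=\frac{a+2}{8}$, decide part 1 by the sign of the global minimum, and decide part 2 from the endpoint values $h_a(0)=-\frac{1}{16(a-1)}$ and $h_a(1)=\frac{-(2a-3)^2}{16(a-1)}$, with $(2a-3)^2\ge0$ doing the work at $x=1$. Your root-location argument in part 2 is a slight variant of the paper's observation that $h_a(x)<\min\{h_a(0),h_a(1)\}$ on $(0,1)$ because the vertex lies inside that interval, but both are correct and equally elementary.

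There is, however, a sign error in the one step you yourself flag as the main obstacle. You propose to verify that the numerator $-(a+2)^2(a-1)-4$ is positive on $(0,1)$. Expanding, $-(a+2)^2(a-1)-4=(a+2)^2(1-a)-4=-a^3-3a^2$, which is \emph{negative} for $a\in(0,1)$ (at $a=0$ it is $0$, and it decreases to $-4$ as $a\to1$). The minimum is nonetheless positive because the denominator $64(a-1)$ is also negative: indeed
\begin{equation*}
h_a\Big(\tfrac{a+2}{8}\Big)=\frac{-(a^3+3a^2)}{64(a-1)}=\frac{a^3+3a^2}{64(1-a)}>0
\qquad\text{for }a\in(0,1),
\end{equation*}
which is exactly the form the paper uses. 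So the inequality you set out to check is false as written, and if carried out literally it would appear to sink the argument; the fix is just to track the sign of the denominator. With that correction the proof is complete and coincides with the paper's.
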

\begin{proof}  It is trivial to have
\begin{align*}
 h'_a(x)=2x-{a+2\over4}\in \begin{cases}(0,+\infty),& \text{ if }x>{a+2\over8}\\ (-\infty,0),& \text{ if }x<{a+2\over8}\end{cases}\,;\qquad h'_a(x)\Big\vert_{x={a+2 \over8}}=0
\end{align*}
Thus, $h_a$ decreases on the interval $\big(-\infty, {a+2\over8}\big)$ and increases on the interval $\big({a+2\over8}, +\infty\big)$. Consequently, $h_a$ takes its global minimum at $\frac{a+2}{8}$:
\begin{align}\label{q2pr22a1}
h_a(x)&\ge h_a(x)\Big\vert_{x={a+2\over8}}=
{(a+2)^2\over64}-{(a+2)^2\over32}-{1\over16(a-1)}= {a^3+3a^2\over 64(1-a)},\quad \forall x\in{\mathbb R}
\end{align}
Moreover, since $0<a\le 2$, the global minimum point of $h_a$ (i.e. ${a+2\over8}$) falls within the interval $(0,1)$. Therefore,
\begin{align}\label{q2pr22a2}
h_a(x)< \min\big\{h_a(0),h_a(1)\big\},\qquad \forall x\in(0,1)
\end{align}

In the case of $a\in(0,1)$, \eqref{q2pr22a1} implies:
\begin{align*}
h_a(x)\ge {a^3+3a^2\over 64(1-a)}>0 ,\qquad \forall x\in{\mathbb R}
\end{align*}

For $a\in(1,2]$, it is evident that $h_a(0)=-{1\over16(a-1)}<0$. Combining this fact with \eqref{q2pr22a2}, the thesis (i.e., $h_a(x)<0$ for any $x\in(0,1)$) will be demonstrated if we can show that $h_a(1)\le 0$. This is indeed the case since
\[h_a(1)=1-\frac{a+2}{4}-\frac{1}{16(a-1)} \le 0\iff \frac{1}{16(a-1)} \ge\frac{2-a}{4} \iff 4a^2-12a+9\ge0
\]
and since the function $x\longmapsto 4x^2-12x+9$ reaches its global minimum value zero at $x={3\over 2}$.     \end{proof}\bigskip

We introduce, for any $a>1$,
\begin{equation}\label{q2pr22f}
a_1:=\ \frac{\sqrt{(a+2)^2+\frac{4}{a-1}} +a+2}{8}; \qquad a_2:=\frac{\sqrt{(a+2) ^2+\frac{4}{a-1}}-a-2}{8}
\end{equation}
The condition $a>1$ clearly implies that $\sqrt{(a+2)^2+\frac{4}{a-1}}>a+2$. Consequently,
\begin{equation}\label{q2pr24a}
0<a_2<a_1;\quad a_1a_2=\frac{1}{16(a-1)  };\ \ \ a_1-a_2=\frac{a+2}{4};\quad a_1+a_2=\frac{a\sqrt{a+3}}{4\sqrt{a-1}}
\end{equation}

\begin{remark} As a straightforward corollary of the first two equalities in \eqref{q2pr24a}, we can deduce the following result using the functions $g_a$ and $h_a$ introduced in \eqref{q2pr20e} and \eqref{q2pr20h} respectively:
\begin{align}\label{q2pr24n}
&\frac{1}{a_1+a_2}\left(\frac{1}{a_1-x}+\frac{1}{a_2+x} \right)\sqrt{\frac{1-x}{x}}\chi_{(0,1)}( x) \notag\\
=&\frac{1}{(a_1-x)(a_2+x) }\sqrt{\frac{1-x}{x}} \chi_{(0,1)}( x)  =\frac{1}{a_1a_2+(a_1-a_2)x- x^2 } \sqrt{\frac{1-x}{x}}\chi_{(0,1)}(x)\notag\\
\overset{\eqref{q2pr24a}}{=}&\frac{1}{\frac{1}{16(  a-1) } +\frac{a+2}{4}x- x^2} \sqrt{\frac{1-x}{x}} \chi_{(0,1)}( x)=-\frac{1}{h_a(x)} \sqrt{\frac{1-x}{x}}\chi_{(0,1)}( x)=-g_a(x)
\end{align}
\end{remark}

\begin{lemma}\label{q2pr24}For any $a>1$, the expressions $a_1$ and $a_2$ introduced above possess the following properties:

1) $a_1\ge1$, and the equality holds if and only if $a=\frac{3}{2}$;

2) the following equalities hold:
\begin{equation}\label{q2pr24b}
( a_1a_2+a_1) (a_1a_2-a_2)= \frac{\left(a-\frac{3}{2} \right) ^{2}}{64( a-1)^2}\ ;\ \qquad a_1^2+a_2^2
=\frac{a^2(  a+3)  -2}{16(a-1)  }
\end{equation}
\begin{equation}\label{q2pr24c}
\sqrt{\frac{a_2+1}{a_2}}-\sqrt{\frac{a_1-1}{a_1}}
=\frac{2a\sqrt{( a-1) (a+3)}} {\sqrt{a^2+a-\frac{3}{2} +\vert a-\frac{3}{2}\vert }}
\end{equation}
and
\begin{align}\label{q2pr24d}
&\left( a_2\sqrt{\frac{a_1-1}{a_1}} +a_1 \sqrt{\frac{a_2+1}{a_2}}\right) ^2\notag\\
=&\frac{1}{4(a-1)}\begin{cases}a^2(  a-1)  \left( a+\frac{5}{2}\right)^2-2a+3, &\text{ if } a\in\big(1,{3\over2}\big]\\
a^2(  a-1)  \left( a+\frac{5}{2}\right)^2, &\text{ if } a\in\big({3\over2},2\big]\end{cases}
\end{align}
\end{lemma}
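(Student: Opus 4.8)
The plan is to reduce every assertion to the four elementary relations recorded in \eqref{q2pr24a}, namely $0<a_2<a_1$, $a_1a_2=\frac{1}{16(a-1)}$, $a_1-a_2=\frac{a+2}{4}$ and $a_1+a_2=\frac{a\sqrt{a+3}}{4\sqrt{a-1}}$, together with the single auxiliary identity
\[
(a_1-1)(a_2+1)=a_1a_2+(a_1-a_2)-1=\frac{(2a-3)^2}{16(a-1)},
\]
obtained by substituting the values of $a_1a_2$ and $a_1-a_2$. This last identity is the engine behind the whole lemma: it isolates the factor $(2a-3)^2=4\big(a-\tfrac32\big)^2$ that is responsible both for the equality case $a=\frac32$ in part 1) and for the case distinction at $a=\frac32$ in \eqref{q2pr24d}.

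For part 1) I would avoid squaring the radical and argue directly: since $a>1$ forces $16(a-1)>0$, the displayed identity gives $(a_1-1)(a_2+1)\ge0$; as $a_2+1>0$ by $a_2>0$, this yields $a_1\ge1$, with equality exactly when $(2a-3)^2=0$, i.e. $a=\frac32$. The two equalities in \eqref{q2pr24b} are then immediate: the first is $(a_1a_2+a_1)(a_1a_2-a_2)=a_1a_2\,(a_1-1)(a_2+1)$, into which I substitute $a_1a_2=\frac{1}{16(a-1)}$ and the auxiliary identity (using $(2a-3)^2=4(a-\tfrac32)^2$); the second is just $a_1^2+a_2^2=(a_1+a_2)^2-2a_1a_2$ evaluated through \eqref{q2pr24a}.

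For \eqref{q2pr24c}, writing $u:=\sqrt{\frac{a_2+1}{a_2}}$ and $v:=\sqrt{\frac{a_1-1}{a_1}}$ (both real, since $a_2>0$ and $a_1\ge1$ by part 1), I would rationalize via $u-v=\frac{u^2-v^2}{u+v}$. Here $u^2-v^2=\frac{1}{a_1}+\frac{1}{a_2}=\frac{a_1+a_2}{a_1a_2}=4a\sqrt{(a-1)(a+3)}$, while $(u+v)^2=u^2+v^2+2uv$ with $u^2+v^2=2+\frac{a_1-a_2}{a_1a_2}=4a^2+4a-6$ and $uv=\sqrt{(a_1-1)(a_2+1)/(a_1a_2)}=|2a-3|$, the last again from the auxiliary identity. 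This gives $(u+v)^2=4\big(a^2+a-\frac32+|a-\frac32|\big)$, and dividing produces exactly the right-hand side of \eqref{q2pr24c}.

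Finally \eqref{q2pr24d} is the main computational obstacle. Expanding $\big(a_2v+a_1u\big)^2=a_2^2v^2+a_1^2u^2+2a_1a_2uv$, the cross term equals $2a_1a_2|2a-3|=\frac{|2a-3|}{8(a-1)}$, and it is precisely the sign of $2a-3$ that splits the formula into the cases $a\in(1,\frac32]$ and $a\in(\frac32,2]$. The symmetric part $a_2^2v^2+a_1^2u^2=\frac{a_2^3(a_1-1)+a_1^3(a_2+1)}{a_1a_2}$ I would rewrite as $(a_1^2+a_2^2)+\frac{a_1^3-a_2^3}{a_1a_2}$ and evaluate with \eqref{q2pr24a}--\eqref{q2pr24b}, using $a_1^3-a_2^3=(a_1-a_2)\big((a_1^2+a_2^2)+a_1a_2\big)$. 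After factoring out $\frac{1}{4(a-1)}$, the claim reduces to the single polynomial identity $B_0+\frac{2a-3}{2}=a^2(a-1)\big(a+\frac52\big)^2$ for the branch $a>\frac32$ (where $B_0$ collects the terms independent of $|2a-3|$); the other branch follows automatically, since replacing $\frac{2a-3}{2}$ by $\frac{3-2a}{2}$ merely adds $3-2a$, producing the extra summand $-2a+3$. Verifying this degree-five identity by expansion is routine but is where essentially all the bookkeeping lives, and care with the non-symmetry of $a_2^2v^2+a_1^2u^2$ in $a_1,a_2$ is the only genuine subtlety.
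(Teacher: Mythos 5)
Your proposal is correct and follows essentially the same route as the paper: everything is reduced to the relations \eqref{q2pr24a} together with the quantity $(a_1-1)(a_2+1)=\frac{(2a-3)^2}{16(a-1)}$ (which the paper encodes via the first identity of \eqref{q2pr24b}), with \eqref{q2pr24c} obtained by the same rationalization of the difference of radicals and \eqref{q2pr24d} by the same split into a symmetric part (evaluated through $a_1^3-a_2^3=(a_1-a_2)(a_1^2+a_1a_2+a_2^2)$) plus a cross term carrying $\vert 2a-3\vert$ that produces the case distinction at $a=\tfrac32$. The only cosmetic difference is part 1), where you read $a_1\ge1$ off the sign of $(a_1-1)(a_2+1)$ rather than manipulating the explicit radical defining $a_1$ as the paper does, and the final degree-five polynomial identity you defer as routine does indeed check out.
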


\begin{proof} By employing the definition of $a_1$ and recognizing the condition $a-1>0$, one derives the inequality stated in assertion 1), namely $a_1\ge1$, as follows:
\begin{align*}
&0\le \left( a-\frac{3}{2}\right) ^2\,\iff\, -4a^2+12a-8\le1 \,\iff\,  \frac{1}{a-1}\geq8-4a\\
\iff&\, \sqrt{(a+2)^2+\frac{4}{a-1}}>8-(a+2)
\,\iff\, a_1\ge1 
\end{align*}
Moreover, one of the aforementioned  inequalities becomes an equality if and only if the other does as well. Thus, the affirmation 1) is proved. Now we proceed to prove affirmation 2).

The two equalities in (\ref{q2pr24b}) are obtained easily as follows:
\begin{align*}
&(a_1a_2+a_1)(a_1a_2-a_2)=a_1a_2 \big(a_1a_2+a_1-a_2-1\big) \overset{(\ref{q2pr24a})}{=} \frac{\big(a-{3\over2}\big)^2}{64(  a-1)^2 }\\
&a_1^2+a_2^2=( a_1-a_2)^2+2a_1a_2
\overset{(\ref{q2pr24a})}{=}\Big(\frac{a+2}{4}\Big)^2
+\frac{2}{16(a-1)  }=\frac{a^2(  a+3)  -2}{16(a-1)  }
\end{align*}
By combining the first, third and fourth equalities in (\ref{q2pr24a}), one obtains:
\begin{align}\label{q2pr24h}
&\sqrt{\frac{a_2+1}{a_2}}-\sqrt{\frac{a_1-1}{a_1}}
=\frac{\sqrt{a_1(a_2+1)}-\sqrt{a_2(a_1-1)} }
{\sqrt{a_1a_2}}\notag\\
=&\frac{ a_1+a_2}{\sqrt{a_1a_2}\big(\sqrt{a_1a_2+a_1}
+\sqrt{a_1a_2-a_2}\big)}
=\frac{a\sqrt{a+3}}{ \sqrt{\big(\sqrt{a_1a_2 +a_1}+\sqrt{a_1a_2-a_2}\big)^2}}
\end{align}
Moreover, \eqref{q2pr24a} and \eqref{q2pr24b} guarantee that
\begin{align*}
&\big(\sqrt{a_1a_2 +a_1}+ \sqrt{a_1a_2-a_2}\big)^2\notag\\
=&a_1a_2 +a_1+a_1a_2-a_2+2 \sqrt{\big( a_1a_2+a_1\big)\big(a_1a_2-a_2\big)}\notag\\
=& \frac{1}{8(a-1) }+\frac{a+2}{4}+\frac{\vert  a-\frac{3}{2}\vert}{4( a-1) }={1\over4(a-1)} \Big(a^2+a -\frac{3}{2}+\Big\vert  a-\frac{3}{2}\Big\vert\Big)
\end{align*}
Thus, we derive \eqref{q2pr24c} by applying this formula to \eqref{q2pr24h}.

To prove \eqref{q2pr24d}, we know, thanks to the first equality in (\ref{q2pr24b}), that:
\begin{equation}\label{q2pr24g}
\sqrt{a_1a_2(a_1-1)(a_2+1)}=\sqrt{(a_1a_2+a_1)(a_1a_2- a_2)}=\frac{\left\vert a-\frac{3}{2}\right\vert }{8(  a-1)  }
\end{equation}
Additionally, the expression on the left hand side of \eqref{q2pr24d} is equal to:
\[a_1^2+a_2^2+\frac{a_1^2}{a_2}-\frac{a_2^2}{a_1}+
2\sqrt{ a_1a_2 (a_1-1)(a_2+1)}\]
where,

$\bullet$ \eqref{q2pr24g} and the second equality in \eqref{q2pr24b} guarantee that:
\[a_1^2+a_2^2+2\sqrt{ a_1a_2 (a_1-1)(a_2+1)}=\frac{a^2(  a+3)  -2+4\big\vert a-{3\over2}\big\vert} {16(a-1)}\]

$\bullet$ by expressing $a_1^3-a_2^3$ as $(a_1-a_2) (a_1^2+a_1a_2+a_2^2)$ and by combining together the second equality in \eqref{q2pr24b}, the second and third equalities in \eqref{q2pr24a}, we discover that:
\[\frac{a_1^3-a_2^3}{a_1a_2}=\frac{(a_1-a_2) (a_1^2+ a_1a_2+a_2^2)}{a_1a_2}=\frac{(a+2)(a^2(a+3)-1)}{4}
\]
Therefore,
\begin{align*}
\left( a_2\sqrt{\frac{a_1-1}{a_1}} +a_1 \sqrt{\frac{a_2+1}{a_2}}\right) ^2
=\frac{a^2(a-1)\left(a+\frac{5}{2}\right)^2-a+{3\over2} +\left\vert a-{3\over2}\right\vert }{4(a-1)  }
\end{align*}
which is nothing else than the expression on the right-hand side of \eqref{q2pr24d}. \end{proof} \bigskip

Now, we introduce other two expressions:
\begin{equation}\label{q2pr22g}
A_1:=16( a-1) -\frac{2a}{a_1+a_2}\bigg( \sqrt {\frac{a_2+1}{a_2}}-\sqrt{\frac{a_1-1}{a_1}}\bigg)
\end{equation}
and
\begin{equation}\label{q2pr22h}
A_2:=\frac{2a}{a_1+a_2}\bigg( a_2\sqrt{\frac{a_1-1} {a_1}}+a_1\sqrt{\frac{a_2+1}{a_2}}\bigg)
-2\big( 2a^2+a-2 \big)
\end{equation}

\begin{lemma}\label{q2pr29}For any $a>1$, hold the following:
\begin{equation}\label{q2pr24e}
0\leq A_1=\begin{cases}0, & \text{\ \ if\ \ }a\in\left(  1,\frac{3}{2}\right] \\
16(  a-1)  \left( 1-\frac{a} {\sqrt{a^{2}+2a-3}} \right)  &\text{\ \ if\ \ }a\in\left(  \frac{3}{2},2\right]\end{cases}
\end{equation}
and
\begin{equation}\label{q2pr24f}
A_2=a_2A_1
\end{equation}
\end{lemma}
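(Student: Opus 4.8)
The plan is to establish the two assertions separately by feeding the identities of Lemma~\ref{q2pr24} into the definitions \eqref{q2pr22g} and \eqref{q2pr22h}; the only genuine work is the algebraic simplification of nested radicals.

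For the formula \eqref{q2pr24e}, I would substitute \eqref{q2pr24c} together with $a_1+a_2=\frac{a\sqrt{a+3}}{4\sqrt{a-1}}$ from \eqref{q2pr24a} into \eqref{q2pr22g}. Since $\frac{2a}{a_1+a_2}=\frac{8\sqrt{a-1}}{\sqrt{a+3}}$, multiplying this against the right-hand side of \eqref{q2pr24c} lets the factors $\sqrt{a+3}$ and $\sqrt{a-1}$ cancel, leaving
\begin{equation*}
A_1=16(a-1)-\frac{16a(a-1)}{\sqrt{a^2+a-\frac32+\big\vert a-\frac32\big\vert}}.
\end{equation*}
The remaining point is the absolute value: for $a\in(1,\frac32]$ the radicand collapses to $a^2$, giving $A_1=16(a-1)-16(a-1)=0$, while for $a\in(\frac32,2]$ it becomes $a^2+2a-3=(a-1)(a+3)$, producing the second branch of \eqref{q2pr24e}. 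Nonnegativity is then immediate: on $(\frac32,2]$ it is equivalent to $\sqrt{a^2+2a-3}\ge a$, i.e.\ to $2a-3\ge0$.

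For \eqref{q2pr24f} I would avoid extracting a square root from \eqref{q2pr24d} (which would reintroduce the same case split) and instead exploit a cancellation. Writing $D:=\sqrt{\frac{a_2+1}{a_2}}-\sqrt{\frac{a_1-1}{a_1}}$ and $E:=a_2\sqrt{\frac{a_1-1}{a_1}}+a_1\sqrt{\frac{a_2+1}{a_2}}$ for the quantities occurring in \eqref{q2pr22g} and \eqref{q2pr22h}, the combination appearing in $A_2-a_2A_1$ has its two copies of $\sqrt{\frac{a_1-1}{a_1}}$ cancel, so that $E+a_2D=(a_1+a_2)\sqrt{\frac{a_2+1}{a_2}}$. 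Consequently
\begin{align*}
A_2-a_2A_1&=\frac{2a}{a_1+a_2}\big(E+a_2D\big)-2(2a^2+a-2)-16a_2(a-1)\\
&=2a\sqrt{\tfrac{a_2+1}{a_2}}-2(2a^2+a-2)-16a_2(a-1),
\end{align*}
with no case distinction.

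It then suffices to check $2a\sqrt{\frac{a_2+1}{a_2}}=2(2a^2+a-2)+16a_2(a-1)$. For the left-hand side I would establish the single radical identity $\sqrt{\frac{a_2+1}{a_2}}=a+\sqrt{(a-1)(a+3)}$ by squaring its (positive) right side to $2a^2+2a-3+2a\sqrt{(a-1)(a+3)}$ and matching this with $1+\frac1{a_2}=1+16(a-1)a_1$, where $\frac1{a_2}=16(a-1)a_1$ follows from $a_1a_2=\frac1{16(a-1)}$ in \eqref{q2pr24a} and $a_1$ is given explicitly by \eqref{q2pr22f}; hence $2a\sqrt{\frac{a_2+1}{a_2}}=2a^2+2a\sqrt{(a-1)(a+3)}$. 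For the right-hand side, the explicit value of $a_2$ yields $16a_2(a-1)=2a\sqrt{(a-1)(a+3)}-2(a^2+a-2)$, so it equals $2a^2+2a\sqrt{(a-1)(a+3)}$ as well; the two sides coincide, proving \eqref{q2pr24f}. The main obstacle is purely the radical bookkeeping---confirming that the two radicands in $A_1$ collapse correctly and that the $\sqrt{\frac{a_1-1}{a_1}}$ terms cancel exactly in $E+a_2D$---together with justifying the principal-root choices, which hold because $a_1\ge1$ (part~1 of Lemma~\ref{q2pr24}) and $a_2>0$ by \eqref{q2pr24a}.
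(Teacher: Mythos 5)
Your proof is correct. For \eqref{q2pr24e} you follow essentially the same path as the paper: combine \eqref{q2pr24c} with $a_1+a_2=\frac{a\sqrt{a+3}}{4\sqrt{a-1}}$ to reduce $A_1$ to $16(a-1)-\frac{16a(a-1)}{\sqrt{a^2+a-3/2+\vert a-3/2\vert}}$ and then resolve the absolute value. For \eqref{q2pr24f}, however, you take a genuinely different and arguably cleaner route. The paper splits into the cases $a\in(1,\tfrac32]$ and $a\in(\tfrac32,2]$, extracts a square root from the case-dependent identity \eqref{q2pr24d} in each regime, and compares the resulting explicit expressions for $A_2$ and $a_2A_1$. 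You instead observe that in $A_2-a_2A_1$ the two occurrences of $\sqrt{\frac{a_1-1}{a_1}}$ cancel, since $E+a_2D=(a_1+a_2)\sqrt{\frac{a_2+1}{a_2}}$, so the whole difference collapses to $2a\sqrt{\frac{a_2+1}{a_2}}-2(2a^2+a-2)-16a_2(a-1)$ with no case distinction; the remaining single-radical identity $\sqrt{\frac{a_2+1}{a_2}}=a+\sqrt{(a-1)(a+3)}$ follows from $\frac1{a_2}=16(a-1)a_1$ and the denesting $\sqrt{(a+2)^2+\frac4{a-1}}=\frac{a\sqrt{a+3}}{\sqrt{a-1}}$ (both checks are correct, and I verified that both sides of your final identity equal $2a^2+2a\sqrt{(a-1)(a+3)}$). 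What your approach buys is that it bypasses \eqref{q2pr24b} and \eqref{q2pr24d} entirely and avoids the $\vert a-\tfrac32\vert$ bookkeeping in the second identity, needing only \eqref{q2pr24a}, the explicit formulas \eqref{q2pr22f}, and $a_1\ge1$ to justify the principal roots; the paper's route, by contrast, reuses the machinery of Lemma \ref{q2pr24} that it has already built.
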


\begin{proof} Firstly, utilizing the definition of $A_1$ (i.e., \eqref{q2pr22g}), in conjunction with the fourth equality in (\ref{q2pr24a}) and the formula (\ref{q2pr24c}), we find
\begin{align*}
A_1=16(a-1)-\frac{16a(a-1)}{\sqrt{ a^2+a-\frac{3}{2}+ \left\vert a-\frac{3}{2}\right\vert }}
\end{align*}
So, the equality in (\ref{q2pr24e}) is established by observing that
\[
\sqrt{a^2+a-\frac{3}{2}+\big\vert a-\frac{3}{2}\big\vert }=
\begin{cases}a, & \text{ if }a\in\left(  1,\frac{3}{2}\right] \\ \sqrt{a^2+2a-3}, & \text{ if } a\in\left( \frac{3}{2},2\right]
\end{cases}
\]
As a result, we obtain the inequality in (\ref{q2pr24e}) by noting that $\sqrt{a^2+2a-3}\ge a$ for any $a\in\Big( \frac{3}{2},2\Big]$.

Next, we prove the formula (\ref{q2pr24f}) for $a\in \left(1,\frac{3}{2}\right]  $.
Thanks to \eqref{q2pr22h} and (\ref{q2pr24e}), it is evident that the following equivalences are valid for $a\in\left(  1,\frac{3}{2}\right] $:
\begin{align*}
(\ref{q2pr24f})\text{ holds }\iff A_2=0\iff a\left( a_2\sqrt{\frac{a_1-1}{a_1}}+a_1 \sqrt{\frac{a_2+1}{a_2}}\right)=(  2a^2+a-2)(  a_1+a_2)
\end{align*}
That is to say, owing to the fourth equality in (\ref{q2pr24a}), $A_2=0$ if and only if
\begin{equation}\label{q2pr24k}
\left(a_2\sqrt{\frac{a_1-1}{a_1}}+a_1\sqrt{ \frac{a_2+1}{a_2}}\right)^2=\frac{(2a^2+a-2)^2(a+3)} {16(  a-1)  }
\end{equation}
Noticing that for $a\in\left(1,\frac{3}{2} \right]$, (\ref{q2pr24d}) implies that the expression on the left-hand side of (\ref{q2pr24k}) equals
\[
\frac{4a^2( a-1) \left( a+\frac{5}{2}\right)^2-8a+12} {16(  a-1)  }
\]
and which coincides with the right-hand side of (\ref{q2pr24k}).

Lastly, we establish the formula (\ref{q2pr24f}) for $a\in\left(\frac{3}{2},2\right]  $. In this case, \eqref{q2pr24d} implies
\begin{equation*}
\left( a_2\sqrt{\frac{a_1-1}{a_1}}+a_1 \sqrt{\frac{ a_2+1}{a_2}}\right)^2=\frac{a^2\left( a+{5\over2} \right) ^2}{4} =\frac{a^2(  2a+5)^2}{16}
\end{equation*}
i.e., thanks to the positivity of the above terms,
\begin{equation*}
a_2\sqrt{\frac{a_1-1}{a_1}}+a_1\sqrt{\frac{a_2+1}{a_2}}
=\frac{a(  2a+5)  }{4}
\end{equation*}
Therefore,
\begin{equation}\label{q2pr24m}
A_2 =\frac{2a(2a+5)\sqrt{a-1}}{\sqrt{a+3}}-2(2a^2+a-2)
\end{equation}
On the other hand, \eqref{q2pr22f} and \eqref{q2pr24e} yield
\begin{align*}
a_2A_1=&\frac{\sqrt{( a+2)^2+\frac{4}{a-1}}-(a+2)}{8} \cdot 16(a-1)\cdot\left( 1-\frac{a}{\sqrt{a^2+2a-3}} \right) \\
=&\frac{2}{\sqrt{a+3}}\cdot\left(  a(  2a+5)  \sqrt{a-1}-\sqrt{a+3}(  2a^2+a-2)  \right)
\end{align*}
and which coincides precisely with the expression on the right-hand side of (\ref{q2pr24m}).\end{proof}\bigskip

\subsection{The distribution of $\mathbf{Q(f) ^{2}}$ for $\mathbf{a \in(  1,2]}$  and $\mathbf{ \Vert f\Vert=1/2} $}        \label{q2pr-s2-1}

Now, let's introduce
\begin{equation}\label{q2pr24p}
\mu_a(B) :=\frac{a}{8\pi(a-1)} \int_{B}(-g_{a})( x)  dx+\frac{A_1}{16(a-1)}\delta_{a_1}(B)  ,\qquad \forall B\in\mathcal{B}
\end{equation}
where, $\mathcal{B}$ denotes the Borel $\sigma-$algebra on $\mathbb{R}$, and $\delta_c$ denotes the Dirac measure centred at $c\in{\mathbb R}$.

Thanks to the positivity of the function $-g_a$ as shown in Lemma \ref{q2pr22}, along with the positivity of the scalars $A_1$ and $a-1$ (since we are now considering the case of $a>1$), we conclude that $\mu_a$ is indeed a measure. Additionally, it is crucial to highlight that $\mu_a$ is {\bf absolutely continuous} if and only if $A_1=0$.

\begin{theorem}\label{q2pr26} For any $a\in\left(  1,2\right]$ (i.e., $q\in( 0,1] $) and $f\in\mathcal{H}_{1/2}$, we have:
\begin{equation*}
\int\frac{\mu_a(  dx)  }{1-tx}=S_a(  t),\qquad \forall\ t\in(  -1,1)
\end{equation*}
In other words, $\mu_a$ introduced in \eqref{q2pr24p} is the distribution of $Q(f)^2$.
\end{theorem}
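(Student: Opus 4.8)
The plan is to verify the identity $\int \frac{\mu_a(dx)}{1-tx} = S_a(t)$ by computing the Stieltjes-type transform of $\mu_a$ directly from its definition \eqref{q2pr24p} and matching it to the explicit form of $S_a$ given in \eqref{q2pr20d}. Since $\mu_a$ splits into an absolutely continuous part with density proportional to $-g_a$ and a point mass at $a_1$, the left-hand side decomposes as
\begin{align*}
\int \frac{\mu_a(dx)}{1-tx} = \frac{a}{8\pi(a-1)}\int_0^1 \frac{-g_a(x)}{1-tx}\,dx + \frac{A_1}{16(a-1)}\cdot\frac{1}{1-ta_1}.
\end{align*}
The first task is therefore to evaluate the integral $\int_0^1 \frac{-g_a(x)}{1-tx}\,dx$. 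Here I would exploit the partial-fraction representation of $-g_a$ recorded in \eqref{q2pr24n}, which rewrites $-g_a(x)$ as $\frac{1}{a_1+a_2}\left(\frac{1}{a_1-x}+\frac{1}{a_2+x}\right)\sqrt{\frac{1-x}{x}}\chi_{(0,1)}(x)$. This converts the problem into evaluating the two standard integrals $\int_0^1 \frac{1}{(c-x)(1-tx)}\sqrt{\frac{1-x}{x}}\,dx$ for $c=a_1$ and (with a sign) $c=-a_2$.

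To compute these, the natural move is the substitution $x = \sin^2\theta$ (or equivalently $\sqrt{(1-x)/x}$ as the new variable), which rationalizes the square root and reduces each piece to an elementary arctangent/algebraic integral; after a further partial-fraction split in the variable $x$ separating the poles at $x=c$ and $x=1/t$, each resulting integral is of the classical form $\int_0^1 \sqrt{\frac{1-x}{x}}\frac{dx}{\alpha - x}$, which evaluates to something like $\pi\big(1 - \sqrt{\frac{\alpha-1}{\alpha}}\big)$ up to normalization. This is precisely why the quantities $\sqrt{\frac{a_1-1}{a_1}}$ and $\sqrt{\frac{a_2+1}{a_2}}$ appearing throughout Lemmas \ref{q2pr24} and \ref{q2pr29} enter, and it explains the role of the auxiliary expressions $A_1$ and $A_2$ defined in \eqref{q2pr22g}--\eqref{q2pr22h}. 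I expect the $\frac{1}{1-tx}$ factor to contribute a term carrying $\sqrt{1-t}$ (coming from $\int_0^1\sqrt{\frac{1-x}{x}}\frac{dx}{1/t-x}$), which is exactly the source of the $\sqrt{1-x}$ radical visible in the numerator of $S_a$ in \eqref{q2pr20d}.

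After assembling the pieces, I would collect the coefficients and show that the $\delta_{a_1}$ contribution combines with the residue at $x=a_1$ from the absolutely continuous part so that the pole at $t = 1/a_1$ is consistent with the rational structure of $S_a$; this is where the identity $A_2 = a_2 A_1$ from Lemma \ref{q2pr29} is the crucial algebraic input, ensuring the point mass has exactly the right weight to reproduce the numerator coefficient $-2(2a^2+a-2)$ in \eqref{q2pr20d}. The main obstacle I anticipate is not any single step but the bookkeeping: keeping track of signs (recall $-g_a \ge 0$ by Lemma \ref{q2pr22} since $h_a<0$ on $(0,1)$), correctly handling the two cases $a\in(1,\tfrac32]$ and $a\in(\tfrac32,2]$ where $A_1$ vanishes or not, and simplifying the radical-laden intermediate expressions into the clean rational-plus-$\sqrt{1-x}$ form of $S_a$. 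The preparatory Lemmas \ref{q2pr24} and \ref{q2pr29} are evidently designed to absorb exactly these simplifications, so the proof should reduce to substituting their conclusions into the assembled transform and verifying that the resulting expression matches \eqref{q2pr20d} term by term, with the validity on $t\in(-1,1)$ following from $f\in\mathcal{H}_{1/2}$ (i.e.\ $\|f\|=\tfrac12$, so $4\|f\|^2 = 1$ and the radical $\sqrt{1-4\|f\|^2 t} = \sqrt{1-t}$ is real).
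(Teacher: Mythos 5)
Your proposal follows essentially the same route as the paper: split the transform of $\mu_a$ into the absolutely continuous part and the atom at $a_1$, use the partial-fraction form \eqref{q2pr24n} of $-g_a$ together with the classical integrals $\int_0^1\sqrt{\tfrac{1-x}{x}}\,\tfrac{dx}{\alpha-x}$ (the paper gets these via $y=\sqrt{x}$ and \eqref{q2pr26b}, equivalent to your $x=\sin^2\theta$), collect the coefficients of $t$, $\sqrt{1-t}$ and the constant term against $A_1$ and $A_2$, and invoke $A_2=a_2A_1$ from Lemma \ref{q2pr29} so that the leftover term $\tfrac{A_1+A_2t}{(1+a_2t)(1-a_1t)}=\tfrac{A_1}{1-a_1t}$ is exactly the transform of the point mass. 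This matches the paper's proof in both structure and the key algebraic inputs.
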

\begin{proof} By utilizing the definitions of $A_1$ and $A_2$, and referring to the following well--known formulae (as found in, for example, Section 17.3.12 of \cite{AbrSte1965}):
\begin{align}\label{q2pr26b}
&\int_0^1\frac{\sqrt{1-x^2}}{1-\alpha x^2}dx=
\frac {\pi}{2}\cdot\frac{1-\sqrt{1-\alpha}}{\alpha}, \quad\forall\alpha\in(  -1,1]\notag\\
&\int_0^1\frac{\sqrt{1-x^2}}{\alpha+x^2}dx=
\frac{\pi}{2}\cdot\Big(\sqrt{\frac{1+\alpha}{\alpha}}-1\Big)  ,\quad \forall\alpha>0
\end{align}
we can derive, for sufficiently small $\vert t\vert$, the following result:
\begin{align}\label{q2pr26e}
&\int_0^1\frac{1} {1-tx}\sqrt{\frac{1-x}{x}}dx \overset{y:=\sqrt{x} }= 2\int_0^1\frac{1} {1-ty^2}\sqrt{1-y^2}dy\notag\\
\overset{\eqref{q2pr26b}}= &{\pi\over t} \big(1-\sqrt{1-t}\big) ={\pi\over 1+\sqrt{1-t}}
\end{align}
Additionally,

$\bullet$ since $0<{1\over a_1}\le 1$, one gets:
\begin{equation}\label{q2pr26f}
\int_0^1\frac{1} {a_1-x}\sqrt{\frac{1-x}{x}}dx
\overset{y:=\sqrt{x} }=\frac{2}{a_1}\int_0^1\frac{1} {1-{y^2\over a_1}}\sqrt{1-y^2}dy\overset{\eqref{q2pr26b}}=\pi \bigg(1-\sqrt{1-{1\over a_1}}\bigg)
\end{equation}
and thus,
\begin{align*}
&\int_0^1\frac{1} {1-tx}\frac{1} {a_1-x} \sqrt{\frac{1-x}{x}}dx=\frac{1} {1-a_1t} \int_0^1
\Big(\frac{1} {a_1-x}- \frac{t} {1-tx}\Big) \sqrt{\frac{1-x}{x}}dx\notag\\
\overset{\eqref{q2pr26e},\eqref{q2pr26f}}=&\frac{\pi} {1-a_1t} \bigg(\sqrt{1-t}-\sqrt{1-{1\over a_1}}\bigg)
\end{align*}

$\bullet$ the fact $a_2>0$ ensures that:
\begin{equation*}
\int_0^1\frac{1} {a_2+x} \sqrt{\frac{1-x}{x}}dx
\overset{y:=\sqrt{x} }=2 \int_0^1\frac{1} {a_2+{y^2}}\sqrt{1-y^2}dy\overset{\eqref{q2pr26b}}=\pi \bigg(\sqrt{1+{1\over a_2}}-1\bigg)
\end{equation*}
and consequently,
\begin{align*}
&\int_0^1\frac{1} {1-tx}\frac{1} {a_2+x} \sqrt{\frac{1-x}{x}}dx=\frac{1} {1+a_2t} \int_{0}^{1}
\Big(\frac{1} {a_2+x}+ \frac{t} {1-tx}\Big) \sqrt{\frac{1-x}{x}}dx\notag\\
\overset{\eqref{q2pr26e},\eqref{q2pr26f}}=&\frac{\pi} {1+a_2t} \bigg(\sqrt{1+{1\over a_2}}-\sqrt{1-t}\bigg)
\end{align*}
Summing up, we find that:
\begin{align}\label{q2pr26c}
&\int_{\mathbb{R}}\frac{-g_a(x)  }{1-tx}dx
=\int_0^1\frac{1}{a_1+a_2}\left(  \frac{1} {a_1-x}+\frac{1} {a_2+x}\right)  {1\over 1-tx} \sqrt{\frac{1-x}{x}} dx \notag\\
=&\frac {\pi}{(1+a_2t)(1-a_1t)(a_1+a_2)}\cdot\notag\\
&\cdot \Big( (1-a_1t)\Big( \sqrt{1+a_2^{-1}} -\sqrt{1-t}\Big)+ (1+a_2t) \Big(\sqrt{1-t}-\sqrt{1-a_1^{-1}}\Big)\Big)
\end{align}

The expression
\[{1\over a_1+a_2}\Big((1-a_1t)\Big(\sqrt{1+a_2^{-1}} -\sqrt{1-t}\Big)+ (1+a_2t) \Big(\sqrt{1-t}-\sqrt{1-a_1^{-1}}\Big)\Big)
\]
is indeed a second-order polynomial in the unknowns $t$ and $\sqrt{1-t}$. In this polynomial,

$\bullet$ the coefficient of $t\sqrt{1-t}$ is obviously 1;

$\bullet$ the coefficient of $\sqrt{1-t}$ is trivially zero;

$\bullet$ the coefficient of $t$ is ${-1\over a_1+a_2} \Big(a_1\sqrt{1+a_2^{-1}}+a_2\sqrt{1- a_1^{-1}}\Big)$, which, according to \eqref{q2pr22h} (i.e., the definition of $A_2$), equals ${-1\over 2a}\big(A_{2}+2\left( 2a^2+a-2\right)\big)$;

$\bullet$ the constant term is $\frac{1}{a_1+a_2}\left( \sqrt{1+a_2^{-1} }-\sqrt{1- a_1^{-1}}\right)$, which, according to \eqref{q2pr22g} (i.e., the definition of $A_1$), equals ${1\over 2a}\big(16( a-1) -A_1\big)$.

By applying these facts to \eqref{q2pr26c}, we conclude that:
\begin{align}\label{q2pr26d}
&\int_{\mathbb{R}}\frac{-g_a( x)  }{1-tx}dx\notag\\
=&\frac {\pi}{2a}\cdot\frac{16(a-1)-2(2a^2+a-2 ) t+ 2at\sqrt{1-t}}{( 1+a_2t)(1-a_1t)}-\frac{\pi}{2a} \cdot \frac{A_1+A_2t}{(1+a_2t)(1-a_1t)  }\notag\\
\overset{\eqref{q2pr24f} }{=}&\frac {\pi}{2a} \cdot\frac{16(a-1)-2(2a^2+a-2) t+2at\sqrt{1-t}}
{( 1+a_2t)(1-a_{1}t)}-\frac{\pi}{2a}\cdot\frac {A_1} {(1-a_1t)  }
\end{align}
Furthermore, as
\begin{align*}
&(  1+a_2t)(1-a_1t)=1-(a_1-a_2)t- a_1a_2t^2\\
\overset{\eqref{q2pr24a}}{=}&1-{a+2\over 4}t-{1\over 16(a-1)}t^2={1\over 16(a-1)}\Big(16(a-1)-4(a-1)(a+2)t -t^2\Big)
\end{align*}
we deduce that the first term in the right-hand side of (\ref{q2pr26d}) is equal to, utilizing \eqref{q2pr20d},
\[
\frac{8\pi(a-1)}{a}\cdot S_a( t)
\]
Therefore, \eqref{q2pr26d} simplifies to
\begin{align*}
\int_{\mathbb{R}}\frac{-g_a( x)  }{1-tx}dx &= \frac{8\pi(a-1)  }{a}\cdot S_a( t)  -\frac{\pi}{2a}\cdot \frac{A_1}{1-a_1t}\notag\\
&=\frac{8\pi(a-1)}{a}\cdot S_a(t) -\frac{\pi A_1}{2a} \int_{\mathbb{R} }{d\delta_{a_1}\over 1-tx}
\end{align*}
which can be rewritten as
\begin{align*}
\int_{\mathbb{R}}\frac{1}{1-tx}\mu_a(dx)=S_a(t)
=\int_{\mathbb{R}}\Big(-g_a(x)\frac{a}{8\pi(a-1)}dx
+\frac{A_1}{16(a-1)}d\delta_{a_1}\Big)
\end{align*}
This clearly validates the assertion. \end{proof} \bigskip

Now let's examine two specific cases: $a=2$ and $a=\frac{3}{2}$ (corresponding to $q=1$ and $q=\frac{1}{2}$, respectively).

For $a=2$, according to \eqref{q2pr22f} and \eqref{q2pr24e}, we obtain the following results:
\[
a_1=\frac{\sqrt{5}+1}{2},\qquad a_2=\frac{\sqrt{5}-1} {2}, \qquad A_1=\frac{16(\sqrt{5}-2)}{\sqrt{5}}
\]
Hence,
\begin{align*}
-g_2(  x)  \overset{\eqref{q2pr24n}}{=}
\frac{\sqrt{1-x}}{\sqrt{x}\left(  \frac{\sqrt{5}+1} {2}-x\right)  \left(  \frac{\sqrt{5}-1}{2}+x\right)  }\chi_{(0,1) }(  x)\ ,\qquad \forall x\in \mathbb{R}
\end{align*}
Theorem \ref{q2pr26} tells us that the distribution of $Q(  f)^2$ for any $f\in\mathcal{H}_{1/2}$ is given by the probability measure $\mu_2$, defined as follows:
\begin{equation*}
\mu_2( B)  :=\int_{B}\frac{-g_2( x)} {4\pi} dx+\frac{\sqrt{5}-2}{\sqrt{5}}\delta_{\frac{\sqrt{5}+1} {2}}(  B)\ ,\qquad \forall B\in\mathcal{B}
\end{equation*}

In the case of $a=\frac{3}{2},$ we find that
\[
a_1=1,\quad a_2=\frac{1}{8}\ ;\ \ \ A_1=0
\]
This means that the distribution of $Q( f)  ^2$ for any $f\in\mathcal{H}_{1/2}$ is {\it absolutely continuous} with the probability density function $-\frac{3}{8\pi}g_{\frac{3}{2}}$, where $-g_{\frac{3}{2}}$ is defined as:
\[
-g_{\frac{3}{2}}(  x)  =\frac{\sqrt{1-x}}{\sqrt{x} (1-x) \left(  x+\frac{1}{8}\right)  }\chi_{(0,1)}(x)\ ,\qquad \forall x\in \mathbb{R}
\]

\subsection{The distribution of $\mathbf{Q(  f)^2}$ for $\mathbf{a\in(  0,1)}$  and $\mathbf{\Vert f\Vert=1/2}$} \label{q2pr-s2-3}

Now, let's delve into the case where $a\in(  0,1)$. 

\begin{theorem}\label{q2pr28} For any $a\in( 0,1)$ (i.e., $q\in(-1,0)$) and $f\in\mathcal{H}_{1/2}$, the distribution of $Q(f)^2$ is absolutely continuous with the probability density function $\frac{a}{8\pi(1-a)  }g_a$, where $g_a$ is the function introduced in \eqref{q2pr20e}.
\end{theorem}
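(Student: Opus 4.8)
The plan is to identify the claimed density by computing the Stieltjes-type transform $t\mapsto\int_{\mathbb R}\frac{1}{1-tx}\cdot\frac{a}{8\pi(1-a)}g_a(x)\,dx$ and showing that, for $|t|$ small, it coincides with the moment-generating function $S_a$ of \eqref{q2pr20d}. Since $Q(f)^2$ is bounded, its distribution is determined by this transform, so the identity establishes the theorem. First I would record that, by Lemma \ref{q2pr22}, $h_a>0$ and hence $g_a\ge0$ on $(0,1)$ for $a\in(0,1)$; combined with $a>0$ and $1-a>0$, this makes $\frac{a}{8\pi(1-a)}g_a$ a genuine nonnegative candidate density, whose total mass will be the value of the transform at $t=0$, namely $S_a(0)=1$. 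The decisive structural difference from Theorem \ref{q2pr26} is that $h_a$ now has \emph{no} real zero, so $g_a$ has no pole on $[0,1]$ and the distribution carries no atomic part; this is precisely the source of absolute continuity.

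The computation itself would mirror the proof of Theorem \ref{q2pr26}. I would keep the partial-fraction identity \eqref{q2pr24n}, which is purely algebraic and uses only $a_1a_2=\frac{1}{16(a-1)}$ and $a_1-a_2=\frac{a+2}{4}$; for $a\in(0,1)$ the two quantities $a_1,a_2$ from \eqref{q2pr22f} are complex-conjugate in nature (because $(a+2)^2+\frac{4}{a-1}<0$ there), but the identity and the subsequent substitution $y=\sqrt{x}$ remain valid. Applying the elementary integrals \eqref{q2pr26b}, \eqref{q2pr26e}, \eqref{q2pr26f}, continued analytically to these parameters, reproduces the same symbolic output as in Theorem \ref{q2pr26}, namely $\int_0^1\frac{-g_a(x)}{1-tx}\,dx=\frac{8\pi(a-1)}{a}S_a(t)-\frac{\pi A_1}{2a}\cdot\frac{1}{1-a_1t}$ with $A_1$ as in \eqref{q2pr22g}.

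The crux is then to show that this correction term vanishes for $a\in(0,1)$. Reusing the evaluation leading to \eqref{q2pr24c}, the quantity under the radical becomes $a^2+a-\frac32+\lvert a-\frac32\rvert=a^2$ (since $a<\frac32$ forces $\lvert a-\frac32\rvert=\frac32-a$), whence $A_1=16(a-1)-\frac{16a(a-1)}{a}=0$. Consequently $\int_0^1\frac{-g_a(x)}{1-tx}\,dx=\frac{8\pi(a-1)}{a}S_a(t)$, that is $\int_0^1\frac{g_a(x)}{1-tx}\,dx=\frac{8\pi(1-a)}{a}S_a(t)$, which is precisely the desired identity $\int_{\mathbb R}\frac{1}{1-tx}\cdot\frac{a}{8\pi(1-a)}g_a(x)\,dx=S_a(t)$.

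I expect the main obstacle to be the rigorous handling of the complex parameters $a_1,a_2$: one must justify the analytic continuation of the formulas \eqref{q2pr26b}--\eqref{q2pr26f} to non-real arguments and fix the branches of the square roots so that the two conjugate contributions recombine into the manifestly real, atom-free answer (and so that the extensions of Lemmas \ref{q2pr24} and \ref{q2pr29} used above are legitimate for $a<1$). A fully real alternative, which I would fall back on to avoid branch bookkeeping, is to complete the square $h_a(x)=\big(x-\frac{a+2}{8}\big)^2+s^2$ with $s^2=\frac{a^2(a+3)}{64(1-a)}$ read off from \eqref{q2pr22a1}, split $\frac{1}{(1-tx)h_a(x)}$ as $\frac{P}{1-tx}+\frac{Bx+C}{h_a(x)}$ over the reals, and evaluate the pieces via \eqref{q2pr26b} together with one additional real integral of the form $\int_0^1\frac{Bx+C}{(x-\frac{a+2}{8})^2+s^2}\sqrt{\frac{1-x}{x}}\,dx$; matching the outcome with the explicit form of $S_a$ in \eqref{q2pr20d} is then routine algebra. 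Either route hinges on the same phenomenon, the disappearance of the atomic part, so I would cross-check the final expression against the normalization at $t=0$ as a guard against sign or branch errors.
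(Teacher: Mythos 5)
Your proposal contains two routes, and it is the \emph{fallback} one that coincides with the paper's actual proof: the paper stays entirely over the reals and splits $\frac{1}{(1-tx)h_a(x)}$ as a term proportional to $\frac{1}{1-tx}$ plus a term $\frac{tx+1-\frac{(a+2)t}{4}}{h_a(x)}$ (all divided by $1-\frac{(a+2)t}{4}-\frac{t^2}{16(a-1)}$), then evaluates the three resulting integrals via \eqref{q2pr26b}, the two $t$-independent ones being $\int_0^1\frac{\sqrt{x(1-x)}}{h_a(x)}dx=2\pi\bigl(\frac1a-1\bigr)$ and $\int_0^1\frac{\sqrt{1-x}}{\sqrt{x}\,h_a(x)}dx=8\pi\bigl(\frac1a-1\bigr)$, and matches the outcome against \eqref{q2pr20d}. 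So carrying out your real alternative reproduces the paper's argument essentially verbatim. Your primary route --- continuing the proof of Theorem \ref{q2pr26} to complex-conjugate $a_1,a_2$ (correctly so: $(a+2)^2+\frac{4}{a-1}=\frac{a^2(a+3)}{a-1}<0$ for $a\in(0,1)$) and deducing $A_1=0$ from $a^2+a-\frac32+\bigl\vert a-\frac32\bigr\vert=a^2$ --- is genuinely different from the paper and, as you yourself flag, incomplete as written: \eqref{q2pr26b}, \eqref{q2pr24c} and Lemmas \ref{q2pr24}--\ref{q2pr29} are only established for real $a_1,a_2$ with $a>1$, and the branch choices needed to continue them are precisely what you leave open. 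What that route would buy, if completed, is a uniform explanation of why absolute continuity holds exactly for $a\le\frac32$ (namely $A_1=0$ there); what the paper's decomposition buys is that no continuation is needed, because $h_a$ is a strictly positive irreducible quadratic on $(0,1)$ (Lemma \ref{q2pr22}) --- which, as you correctly identify, is also the structural reason the atomic part disappears. Your preliminary observations (nonnegativity of $\frac{a}{8\pi(1-a)}g_a$, determination of the law by the transform via boundedness, normalization from $S_a(0)=1$) all match the paper's setup.
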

\begin{proof} By using the definition of the function $g_a$, i.e., \eqref{q2pr20e}, and the fact 
\begin{align*}
&\frac{1}{(1-tx)\left(  x^2-\frac{a+2} {4}x-\frac{1}{16(a-1)  }\right)  } \\
=&\frac{1}{1-\frac{\left(  a+2\right)  t}{4}- \frac{t^{2}}{16\left( a-1\right)}} \left(\frac{t^{2}} {1-tx}+\frac{tx+1-\frac{\left(  a+2\right)  t}{4}}
{x^{2}-\frac{a+2}{4}x-\frac{1}{16\left(  a-1\right)  }}\right), \quad\forall x\in(0,1)\text{ and } t\in(-1,1)
\end{align*}
one finds that
\begin{align}\label{q2pr28x}
&\int_\mathbb{R}\frac{g_a(x) }{1-tx}dx=\int_0^1
\frac{\sqrt{1-x}}{\sqrt{x}(1-tx)\left(  x^2-\frac{a+2} {4}x-\frac{1}{16(a-1)  }\right)  }dx\notag\\
=& \frac{1}{1-\frac{\left(  a+2\right)  t}{4}- \frac{t^{2}}{16\left( a-1\right)}} \bigg(t^2\int_0^1\frac{\sqrt{1-x}}{\sqrt{x}(1-tx)}dx
+t\int_0^1\frac{\sqrt{x(1-x)}}{x^2-\frac{a+2} {4}x-\frac{1}{16(a-1)  }}dx\notag\\
&\hspace{4cm}+\Big(1-\frac{(a+2)t}{4}\Big)\int_0^1
\frac{\sqrt{1-x}}{\sqrt{x}\Big(x^2-\frac{a+2} {4}x-\frac{1}{16(a-1)  }\Big)}dx\bigg)
\end{align}
With the assistance of formula \eqref{q2pr26b}, an elementary calculation shows that for any $a\in(  0,1)$,
\[
\int_0^1\frac{\sqrt{x(  1-x)  }}{x^2- \frac{a+2}{4}x-\frac{1}{16(a-1)  }}dx=2\pi\left(  \frac{1}{a}-1\right)
\]
and
\[
\int_0^1\frac{\sqrt{1-x}}{\sqrt{x}\left(  x^2-\frac{a+2} {4}x-\frac{1}{16(a-1)  }\right)  }dx= 8\pi\left(  \frac{1}{a}-1\right)
\]
By utilizing these equalities and (\ref{q2pr26b}) to \eqref{q2pr28x}, we can conclude that, for any $a\in(0,1)$ and $t\in(-1,1)$,
\begin{align*}
\int_{0}^{1}\frac{g_{a}\left(  x\right)  }{1-tx}dx=& \frac{\pi t\left(  1-\sqrt{1-t}\right)+
2\pi\left(  \frac{1}{a}-1\right)  t+
8\pi\left(  \frac{1}{a}-1\right) \left(  1-\frac{\left(
a+2\right) t}{4}\right) }{1-\frac{\left(a+2\right)  t}{4}-\frac{t^{2}}{16\left(  a-1\right)  }}\\
=&
\frac{8\pi\left(  1-a\right)  }{a}\cdot\frac{16\left(  a-1\right)  -2\left(2a^{2}+a-2\right)  t+2at\sqrt{1-t}}{16\left(  a-1\right)  -4\left(a-1\right)  \left(  a+2\right)  t-t^{2}}\\ \overset{\eqref{q2pr20d}}=&\frac{8\pi\left( 1-a\right)}
{a}\cdot S_{a}\left(  t\right)
\end{align*}
This clearly confirms the thesis.    \end{proof}

\begin{corollary} For any $a\in(0,2]$, the function $p_a$ defined as
\begin{equation}\label{q2pr28e}
p_a(x):=\frac{2a}{\pi}\cdot\frac{\sqrt{1-x}} {\sqrt{x}\left(1+4(a-1)(a+2)  x-16(  a-1)x^2\right)  }\chi_{(0,1)}(x)  ,\quad \forall x\in\mathbb{R}
\end{equation}
is positive and measurable. Moreover,

1) for $a\in \big(0,{3\over2}\big]$, $p_a$ is the probability density function of the distribution of $Q(f)^2$ with $f\in\mathcal{H}_{1\over 2}$;

2) for any $a\in\left(  \frac{3}{2},2\right]$, $p_a$ does {\bf not} qualify as a probability density function and the distribution of $Q(f)^2$ with $f\in\mathcal{H}_{1\over 2}$ is given by the following probability measure:
\[
B\longmapsto \int_{B}p_{a}(  x) dx+ \frac{A_1}{16(  a-1)  }\delta_{a_1}(  B)  ,\ \ \ \forall B\in\mathcal{B}
\]
\end{corollary}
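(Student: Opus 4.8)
The plan is to recognize $p_a$ as nothing more than the explicit form of the absolutely continuous part already produced in Theorems \ref{q2pr26} and \ref{q2pr28}, and then to read off the two assertions by combining those theorems with the evaluation of $A_1$ in Lemma \ref{q2pr29}. The decisive first step is purely algebraic: factor the quadratic in the denominator of \eqref{q2pr28e}. A direct computation gives
\[
1+4(a-1)(a+2)x-16(a-1)x^2=-16(a-1)\,h_a(x),
\]
with $h_a$ as in \eqref{q2pr20h}. Substituting this into \eqref{q2pr28e} and comparing with \eqref{q2pr20e} yields the single identity
\[
p_a(x)=\frac{a}{8\pi(1-a)}\,g_a(x)=\frac{a}{8\pi(a-1)}\,(-g_a)(x),
\]
valid for every $a\in(0,1)\cup(1,2]$. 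This identity is the hinge of the argument: its right-hand expressions are exactly the densities appearing in Theorem \ref{q2pr28} (for $a\in(0,1)$) and in the measure $\mu_a$ of \eqref{q2pr24p} from Theorem \ref{q2pr26} (for $a\in(1,2]$).

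Positivity and measurability then come essentially for free. Measurability is clear, and positivity follows from the sign analysis of Lemma \ref{q2pr22}: for $a\in(0,1)$ one has $\tfrac{a}{8\pi(1-a)}>0$ and $g_a\big\vert_{(0,1)}>0$, while for $a\in(1,2]$ one has $\tfrac{a}{8\pi(a-1)}>0$ and $(-g_a)\big\vert_{(0,1)}>0$, so in both regimes the product is strictly positive on $(0,1)$. The value $a=1$ is treated separately below.

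For assertion 1) I would split $(0,\tfrac32]$ into three pieces. For $a\in(0,1)$, Theorem \ref{q2pr28} states directly that $\tfrac{a}{8\pi(1-a)}g_a=p_a$ is the density of $Q(f)^2$. For $a\in(1,\tfrac32]$, Lemma \ref{q2pr29} gives $A_1=0$, so the measure $\mu_a$ of \eqref{q2pr24p} reduces to its absolutely continuous part, whose density is $\tfrac{a}{8\pi(a-1)}(-g_a)=p_a$; by Theorem \ref{q2pr26} this is the distribution of $Q(f)^2$. The only gap is $a=1$ (i.e.\ $q=0$), which is excluded from both theorems; here I would invoke Proposition \ref{q2pr20} (the semicircle law on $(-1,1)$ for $\Vert f\Vert=\tfrac12$) and compute the push-forward under $y\mapsto y^2$, checking that its density equals $p_1(x)=\tfrac{2}{\pi}\tfrac{\sqrt{1-x}}{\sqrt{x}}\chi_{(0,1)}(x)$ and integrates to $1$.

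For assertion 2), with $a\in(\tfrac32,2]$, Lemma \ref{q2pr29} gives $A_1=16(a-1)\big(1-\tfrac{a}{\sqrt{a^2+2a-3}}\big)$, which is strictly positive precisely because $a>\tfrac32$ forces $\sqrt{a^2+2a-3}>a$. Since $\mu_a$ is a probability measure (its total mass is $S_a(0)=1$) carrying the strictly positive atom $\tfrac{A_1}{16(a-1)}\delta_{a_1}$, its continuous part integrates to $1-\tfrac{A_1}{16(a-1)}<1$; hence $p_a$ cannot be a probability density, and Theorem \ref{q2pr26} identifies the full distribution as $B\mapsto\int_B p_a\,dx+\tfrac{A_1}{16(a-1)}\delta_{a_1}(B)$. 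I expect the main obstacle to be the bookkeeping at the two boundary values: closing the $a=1$ gap by the semicircle push-forward, and confirming at $a=\tfrac32$ that the atom genuinely disappears ($a_1=1$ by Lemma \ref{q2pr24} and $A_1=0$ by Lemma \ref{q2pr29}), so that the two assertions match continuously at the seam $a=\tfrac32$.
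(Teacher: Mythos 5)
Your proposal is correct and follows essentially the same route as the paper's own proof: rewrite the denominator as $-16(a-1)h_a(x)$ to identify $p_a=\frac{a}{8\pi(1-a)}g_a$ for $a\ne1$, treat $a=1$ separately via Proposition \ref{q2pr20}, and then read off both assertions from Lemma \ref{q2pr22} together with Theorems \ref{q2pr26} and \ref{q2pr28}. Your explicit appeal to Lemma \ref{q2pr29} for the sign of $A_1$ on the two subintervals only spells out what the paper leaves implicit in the definition of $\mu_a$.
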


\begin{proof} Since $p_1(x)=\frac{2}{\pi}\cdot \frac{\sqrt{1-x}} {\sqrt{x}}$ is nothing but the probability density function of $\xi^2$, where $\xi$ distributes in the semi-circle law on the interval $(-1,1)$, Proposition \ref{q2pr20} confirms our result for $a=1$. Now, let's consider the case of $1\ne a\in(0,2]$. In this case, for any $x\in \mathbb{R}$, we have:
\begin{align}\label{q2pr28d}
p_a( x) =\frac{a}{8\pi(1-a)}\frac{\sqrt{1-x}} {\sqrt{x}\left( x^2- \frac{a+2}{4}x-\frac{1}{16(a-1) } \right)  }\chi_{(0,1)}( x)=\frac{a}{8\pi(1-a)}g_a(x)
\end{align}
Additionally, Lemma \ref{q2pr22} guarantees the positivity of the function $p_{a}$ for any $a\in(0,1)\cup (1,2]$. Finally, by utilizing \eqref{q2pr28d}, Theorem \ref{q2pr26}, and Theorem \ref{q2pr28}, we can derive the results in affirmations 1) and 2).      \end{proof}

\subsection{The distribution of $\mathbf{Q( f) }$}
\label{q2pr-s2-4}

\medskip

Now we are ready to present the distribution of the field operator $Q(f) $ with an arbitrary test function $0\neq f\in\mathcal{H}$. 

\begin{theorem}\label{q2pr30} For any $q\in(-1,1]$ (equivalently, $a:=1+q\in(  0,2]  $), and for any $0\ne f\in\mathcal{H}$, we denote

$\bullet$ $L_{q,f}$ as the distribution of the field operator $Q(f)$;

$\bullet$ function $l_{q,f}$ as
\begin{equation}\label{q2pr30a}
l_{q,f}( x) :=\frac{( 1+q)\Vert f\Vert^2}{2\pi} \cdot\frac{\sqrt{4\Vert f\Vert ^2-x^2}} {\left(\Vert f\Vert ^4+q(q+3)\Vert f\Vert^2 x^2-qx^4\right) } \chi_{(  -2\Vert f\Vert ,2\Vert f\Vert )}(  x)  ,\ \ \forall x\in\mathbb{R}
\end{equation}
Then the following statements hold true:

1) for any $q\in\left( -1,\frac{1}{2}\right]$, $L_{q,f}$ is absolutely continuous, and $l_{q,f}$ is its probability density function;

2) for any $q\in\left(  \frac{1}{2},1\right],$ $l_{q,f}$ is {\bf not} probability density function, and in this case,
\begin{equation}\label{q2pr30b}
L_{q,f}(  B)  =\int_{B}l_{q,f}(  x)  dx+ \frac{1}{2}\Big( 1-\frac{1+q}{\sqrt{q( q+4) }} \Big)  \Big(\delta_{-\frac {\sqrt{a_1}}{2\Vert f\Vert ^2}}+\delta_{\frac {\sqrt{a_1}}{2\Vert f\Vert ^2}}\Big) ( B)
\end{equation}
for all Borel sets $B\in\mathcal{B}$, where $a_1$ is defined as in \eqref{q2pr22f}.
\end{theorem}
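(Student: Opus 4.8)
The plan is to read off $L_{q,f}$ from the law of $Q(f)^2$, which the preceding results already pin down, using two structural inputs: the symmetry of $Q(f)$ and a linear rescaling that reduces an arbitrary $0\ne f$ to the unit-sphere case $f\in\mathcal{H}_{1/2}$. By Proposition \ref{q2pr12} every odd vacuum moment of $Q(f)$ vanishes, so $L_{q,f}$ is symmetric; combined with the boundedness of $A(f),A^+(f)$ from the same proposition (hence moment-determinacy), this shows that $L_{q,f}$ is the unique symmetric probability measure whose image under $x\mapsto x^2$ is the law of $Q(f)^2$. Everything therefore reduces to the law of $Q(f)^2$, which for $f\in\mathcal{H}_{1/2}$ is supplied by the Corollary attached to \eqref{q2pr28e}: absolutely continuous with density $p_a$ for $a\in(0,\tfrac32]$, and the same density plus a Dirac mass $\tfrac{A_1}{16(a-1)}$ at $a_1$ for $a\in(\tfrac32,2]$.

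First I would record the rescaling. As $A^+(\cdot)$ is linear and $A(\cdot)=(A^+(\cdot))^*$ is conjugate-linear, for the positive real scalar $2\Vert f\Vert$ one has $Q(f)=2\Vert f\Vert\,Q(g)$ with $g:=f/(2\Vert f\Vert)\in\mathcal{H}_{1/2}$, so $Q(f)^2=4\Vert f\Vert^2\,Q(g)^2$. Consequently the law of $Q(f)^2$ is the image under $y\mapsto 4\Vert f\Vert^2 y$ of the measure from the Corollary, and $L_{q,f}$ is the image of $L_{q,g}$ under $x\mapsto 2\Vert f\Vert x$. A consistency check at the level of generating functions is that $T_q(x)=S_a(4\Vert f\Vert^2 x)$, which is exactly this rescaling.

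The core step is then the symmetric square root, performed piece by piece. An absolutely continuous component of the law of $Q(f)^2$ with density $m$ on $(0,\infty)$ contributes to $L_{q,f}$ the symmetric density $x\mapsto|x|\,m(x^2)$, while an atom of mass $w$ at $c>0$ splits into two atoms of mass $w/2$ at $\pm\sqrt c$. Applying the first rule to $m(y)=\tfrac{1}{4\Vert f\Vert^2}\,p_a\!\big(\tfrac{y}{4\Vert f\Vert^2}\big)$ and substituting $y=x^2$ should recover \eqref{q2pr30a}: after clearing denominators by $\Vert f\Vert^4$ and using $a-1=q$, $a+2=q+3$, $a=1+q$, the factor $1+4(a-1)(a+2)x^2-16(a-1)x^4$ turns into $\Vert f\Vert^4+q(q+3)\Vert f\Vert^2x^2-qx^4$, and the cut-off $\chi_{(0,1)}(x^2)$ rescales to $\chi_{(-2\Vert f\Vert,2\Vert f\Vert)}$. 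For $a\in(\tfrac32,2]$ the atom of $Q(f)^2$ sits at $4\Vert f\Vert^2 a_1$, which lies outside the absolutely continuous support because $a_1>1$ there by Lemma \ref{q2pr24}; it therefore contributes the symmetric pair of atoms at $\pm 2\Vert f\Vert\sqrt{a_1}$, each of mass $\tfrac12\cdot\tfrac{A_1}{16(a-1)}$, explaining the non-absolutely-continuous form \eqref{q2pr30b}.

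The main obstacle is algebraic bookkeeping rather than any conceptual difficulty. The density identification is a routine change of variables once the rescaling is fixed. The atomic part is more delicate: one must rewrite the mass $\tfrac12\cdot\tfrac{A_1}{16(a-1)}$ in the closed form $\tfrac12\big(1-\tfrac{1+q}{\sqrt{q(q+4)}}\big)$ of \eqref{q2pr30b}, which hinges on the explicit value of $A_1$ from Lemma \ref{q2pr29} together with the factorisation $a^2+2a-3=(a-1)(a+3)=q(q+4)$ obtained from $a=1+q$. Finally the dichotomy is governed by the threshold $a=\tfrac32$, i.e. $q=\tfrac12$, which is exactly where $A_1=0$ by Lemma \ref{q2pr29}: for $q\in(-1,\tfrac12]$ no atom appears and $L_{q,f}$ is absolutely continuous with density $l_{q,f}$, whereas for $q\in(\tfrac12,1]$ the atom survives, $l_{q,f}$ no longer integrates to $1$, and the mixed law \eqref{q2pr30b} results.
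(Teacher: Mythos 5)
Your proposal is correct and follows essentially the same route as the paper's proof: reduce to $f\in\mathcal{H}_{1/2}$ via $Q(f)=2\Vert f\Vert\,Q\big(f/(2\Vert f\Vert)\big)$, take the symmetric square root of the law of $Q(f)^2$ furnished by Theorems \ref{q2pr26} and \ref{q2pr28} (density $x\mapsto \vert x\vert\, p_a(x^2)$ plus half-mass atoms at $\pm\sqrt{a_1}$ when $A_1>0$), and convert $\frac{A_1}{16(a-1)}$ into $1-\frac{1+q}{\sqrt{q(q+4)}}$ via Lemma \ref{q2pr29} together with $a^2+2a-3=q(q+4)$. One incidental remark: both your derivation and the paper's own scaling step place the atoms at $\pm 2\Vert f\Vert\sqrt{a_1}$, so the location $\pm\sqrt{a_1}/(2\Vert f\Vert^2)$ printed in \eqref{q2pr30b} appears to be a typo in the statement rather than a defect of your argument.
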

\begin{remark} It is evident that

$\bullet$ $\frac{1+q}{\sqrt{q( q+4) }} {=}\frac{a}{\sqrt{a^2+2a-3}}$ because $a=1+q$;

$\bullet$ for any $q\in\left(  \frac{1}{2},1\right]  ,$ we have $q( q+4)>q^2+2q+1$, which implies $0<\frac{1+q}{\sqrt{q( q+4) }}<1$.
\end{remark}

\begin{proof}
[Proof of Theorem \ref{q2pr30}] \eqref{q2pr30a} reveals that when $f\in\mathcal{H}_{1/2}$,
\begin{equation*}
l_{q,f}(x)=\frac{2a }{\pi}\cdot \frac{\sqrt{1-x^{2}}} {1+ 4(a-1)(a+2)x^2-16(a-1)x^4 } \chi_{(-1 ,1)}(x),\ \ \forall x\in{\mathbb R}
\end{equation*}
which is indeed the function $x\longmapsto\vert x\vert p_a(x^2)$, where $p_a$ is introduced in \eqref{q2pr28e}. Therefore, for $f\in\mathcal{H}_{1/2}$, the two conclusions of Theorem \ref{q2pr30} are guaranteed by Theorem \ref{q2pr26}, Theorem \ref{q2pr28}, and the following well--known facts:

{\it Let $\xi$ be a 1--dimensional random variable defined on a probability space $\left(  \Omega,\mathcal{F}, \mathbf{P} \right)$. If $\xi$ is {\bf symmetric} (meaning that $\xi$ and $-\xi$ have the same distributed), then its distribution can be determined by the distribution of $\xi^{2}$. Moreover, if additionally, the distribution of $\xi^2 $ can be expressed as a sum $\nu_c+\nu_d$, where

$\bullet$ $\nu_c$ is absolutely continuous with a (sub-probability) density function $f$,

$\bullet$ $\nu_d$ is a discrete measure of the form $\nu_d=\sum_k p_k\delta_{b_k}$, where $p_k\in(0,1]$ and $b_k\in (0,+\infty)$ for all $k$,

\noindent then, the distribution of $\xi$ can be expressed as $\nu'_c+\nu'_d$, where:

$\bullet$ $\nu'_c$ is absolutely continuous with the (sub-probability) density function given by $x\longmapsto \vert x\vert f(x^2)$;

$\bullet$ $\nu'_d$ is a discrete measure of the form $\nu'_d=\sum_k {p_k\over 2}\big(
\delta_{-\sqrt{b_k}}+\delta_{\sqrt{b_k}}\big)$.   }

In general, for any $f\in\mathcal{H}\setminus\{0\} $ and $a\in(0,2]$, we can express $Q(f)$ as $2\Vert f\Vert Q\Big(\frac{f}{2\Vert f\Vert}\Big)$. Therefore, our main result is obtained, as $\frac{f}{2\Vert f\Vert}$ belongs to $\mathcal{H}_{1\over 2}$, by taking into account the following well-known fact:

{\it Let $\eta$ be a 1--dimensional random variable defined on a probability space $\left( \Omega, \mathcal{F}, \mathbf{P}\right)$, and let $\alpha>0$. If the distribution of $\eta $ has the form $\nu_c+\nu_d$, where:

$\bullet$ $\nu_c$ is absolutely continuous with a (sub-probability) density function denoted as $f$,

$\bullet$ $\nu_d$ is a discrete measure of the form $\nu_d=\sum_k p_k\delta_{b_k}$, where $p_k\in(0,1]$ and $b_k\in (0,+\infty)$ for all $k$,

\noindent then the distribution of $\alpha\eta$ follows the form of $\nu'_c+\nu'_d$ and where:

$\bullet$ $\nu'_c$ is absolutely continuous and characterized by the (sub-probability) density function $x\longmapsto {1\over \alpha} f({x\over \alpha})$;

$\bullet$ $\nu'_d$ is a discrete measure of the form $\nu'_d=\sum_k {p_k}\delta_{\alpha b_k}$.   }
\end{proof}

\bigskip

{\bf Acknowledgements}: This work is partially supported by the research fund of University of Bari ``Aldo Moro'': Processi stocastici e loro applicazioni.

\end{document}